\newcommand{\OPT}{\mathrm{OPT}}
\renewcommand{\P}{\mathrm{P}}
\newcommand{\NP}{\mathrm{NP}}
\newcommand{\DTIME}{\mathrm{DTIME}}
\begin{document}
\title{Approximation Algorithms for Priority Steiner Tree Problems}
%
%
\author{Faryad Darabi Sahneh\inst{1} \and
Stephen Kobourov\inst{1}\and
Richard Spence\inst{1}}
\authorrunning{F. Darabi Sahneh et al.}
%
\institute{University of Arizona, Tucson, AZ 85721, USA}
\maketitle              
\begin{abstract}
In the Priority Steiner Tree (PST) problem, we are given an undirected graph $G=(V,E)$ with a source $s \in V$ and terminals $T \subseteq V \setminus \{s\}$, where each terminal $v \in T$ requires a nonnegative priority $P(v)$. The goal is to compute a minimum weight Steiner tree containing edges of varying rates such that the path from $s$ to each terminal $v$ consists of edges of rate greater than or equal to $P(v)$. The PST problem with $k$ priorities admits a $\min\{2 \ln |T| + 2, k\rho\}$-approximation [Charikar et al., 2004], and is hard to approximate with ratio $c \log \log n$ for some constant $c$ [Chuzhoy et al., 2008]. In this paper, we first strengthen the analysis provided by [Charikar et al., 2004] for the $(2 \ln |T| + 2)$-approximation to show an approximation ratio of $\lceil \log_2 |T| \rceil + 1 \le 1.443 \ln |T| + 2$, then provide a very simple, parallelizable algorithm which achieves the same approximation ratio. We then consider a more difficult node-weighted version of the PST problem, and provide a $2 \ln (|T|+1)$-approximation using extensions of the spider decomposition by [Klein \& Ravi, 1995]. This is the first result for the PST problem in node-weighted graphs. Moreover, the approximation ratios for all above algorithms are tight.

\keywords{
priority Steiner tree \and 
approximation algorithms \and network design}
\end{abstract}

\section{Introduction}
We consider generalizations of the Steiner tree and node-weighted Steiner tree (NWST) problems in graphs where the terminals $T$ possess varying priority or quality of service (QoS) requirements, in which we seek to connect the terminals using edges of the appropriate rate or better. These problems have applications in multimedia and electric power distribution~\cite{turletti1994,maxemchuk1997,Balakrishnan1994}, multi-level graph visualization~\cite{MLST2018}, and other network design problems where a source or root is to be connected to a set of heterogeneous receivers possessing different bandwidth or priority requests. We define a Priority Steiner Tree (PST) as follows:

\begin{definition}[Priority Steiner Tree (PST)]\label{def:PST}
Given an undirected graph $G=(V,E)$, a source $s \in V$, and terminals $T \subseteq V \setminus \{s\}$, where each terminal $v \in T$ requires a nonnegative priority $P(v)$, a \emph{PST} is a tree $\mathcal{T}\subseteq G$ rooted at $s$ containing edges of varying rates such that for all terminals $v \in T$, the $s$--$v$ path in $\mathcal{T}$ consists of edges of rate $P(v)$ or higher.
\end{definition} 
We denote by $k$ the number of distinct priorities. Vertices in $V \setminus (T \cup \{s\})$ have zero priority but may be included in $\mathcal{T}$. Let $w(e,r)$ denote the weight of edge $e$ at rate $r$. We assume $w(e,0) = 0$ and $w(e,r_1) \le w(e,r_2)$ for all $0 \le r_1 \le r_2$ and edges $e$ (i.e., higher-rate edges weigh at least as much as lower-rate edges). The weight of a PST $\mathcal{T}$ is the sum of the weights of the edges in $\mathcal{T}$ at their respective rates, namely $w(\mathcal{T}) := \sum_{e \in E(\mathcal{T})} w(e, R(e))$.

\begin{problem}[\textsc{Priority Steiner Tree} problem]\label{problem:PST}
Given a graph $G=(V,E)$, source $s$, terminals $T \subseteq V$, priorities $P(\cdot)$, and edge weights $w:E \times \mathbb{R}_{\ge 0} \to \mathbb{R}_{\ge 0}$, compute a PST $\mathcal{T}$ with minimum weight.
\end{problem}

 While Problem~\ref{problem:PST} in the case where edge weights are proportional to rate (i.e., $w(e,r) = r \cdot w(e,1)$ for all $e \in E$ and $r \ge 0$) admits $O(1)$--approximations~\cite{Charikar2004,Karpinski2005,MLST2018}, the best known approximation ratio for \textsc{Priority Steiner tree} with arbitrary weights is $\min\{2\ln |T| + 2, k\rho\}$ by Charikar et al.~\cite{Charikar2004} (see Section~\ref{section:PST}). On the other hand, Chuzhoy et al.~\cite{Chuzhoy2008} show that $\textsc{Priority Steiner tree}$ cannot be approximated with ratio $c \log \log n$ for some constant $c$ unless $\NP \subseteq \DTIME(n^{O(\log \log \log n)})$, even with unit edge weights\footnote{We remark that the formulation of \textsc{Priority Steiner Tree} given in~\cite{Chuzhoy2008} is slightly more specific; each edge has a single weight $c_e$ as well as a quality of service (priority) $Q(e)$ on input, and the goal is to compute a Steiner tree such that the path from root to each terminal $v$ uses edges of quality of service greater than or equal to $P(v)$.}.

In Section~\ref{section:greedyPNWST}, we introduce a node-weighted variant of \textsc{Priority Steiner Tree}, called \textsc{Priority NWST} (Def.~\ref{def:PNWST}). Here we assume edges have zero weight, as an instance with edge and vertex weights can be converted to an instance with only vertex weights by subdividing each edge $uv$ into two edges $uw$, $wv$ and assigning the weight of edge $uv$ to vertex $w$.

\begin{definition}[Priority Node-Weighted Steiner Tree (PNWST)] \label{def:PNWST} Given an undirected graph $G=(V,E)$, source $s$, and terminals $T \subseteq V \setminus \{s\}$, where each terminal $v \in T$ requires a nonnegative priority $P(v)$, a \emph{priority node-weighted Steiner tree} (PNWST) is a tree $\mathcal{T}$ rooted at $s$ containing \underline{vertices} of varying rates $R(v)$ such that for all terminals $v \in T$, the $s$--$v$ path in $\mathcal{T}$ consists of vertices of rate $P(v)$ or higher.
\end{definition}
In particular, we require 
$R(v) \ge P(v)$ for all $v \in T$. Further, we can assume w.l.o.g. that the path from $s$ to each terminal uses vertices of non-increasing rate (see Def.~\ref{def:rate-tree}). 
As in the NWST problem, it is conventional to 
also assume terminals have zero weight, as they must be included in any feasible solution; thus, we assume $w(v,r) = 0$ for $0 \le r \le P(v)$ and $w(v,r_1) \le w(v,r_2)$ for all $0 \le r_1 \le r_2$.
The weight of a PNWST $\mathcal{T}$ with \emph{vertex} rates $R(\cdot)$ 
is $w(\mathcal{T}) := \sum_{v \in V(\mathcal{T})} w(v, R(v))$.
\begin{problem}[\textsc{Priority NWST} problem]\label{problem:PNWST}
Given a graph $G=(V,E)$, source $s$, terminals $T \subseteq V \setminus \{s\}$, vertex priorities $P(\cdot)$, and vertex weights $w:V \times \mathbb{R}_{\ge 0} \to \mathbb{R}_{\ge 0}$, compute a PNWST $\mathcal{T}$ with minimum weight.
\end{problem}

The $\textsc{Priority NWST}$ problem generalizes the NWST problem, and hence cannot be approximated with ratio $(1-o(1))\ln |T|$ unless $\P = \NP$~\cite{feige1998threshold,KLEIN1995104,dinur2013}, via a reduction from the set cover problem. In Section~\ref{section:greedyPNWST}, we show that the \textsc{Priority NWST} problem admits a $2 \ln (|T|+1)$--approximation (Theorem~\ref{thm:greedyPNWST}) using extensions of the spider decomposition given by Klein and Ravi~\cite{KLEIN1995104} to accommodate the priority constraints of the \textsc{Priority NWST} problem. The generalization is not immediately obvious; in particular it is not immediate whether an instance of \textsc{Priority NWST} can be formulated as an instance of NWST. However, NWST and \textsc{Priority NWST} can be easily reduced to Steiner arborescence (or directed Steiner tree), which admits a quasi-polynomial $O\left(\frac{\log^2 |T|}{\log \log |T|}\right)$-approximation~\cite{grandoni2019dst}.

\paragraph*{Notation.} A graph $G=(V,E)$ with $n=|V|$ and $m=|E|$ is undirected and connected, unless stated otherwise. Given terminals $u,v \in T$ for the PST problem, denote by $\sigma(u,v)$ the weight of a minimum weight $u$--$v$ path in $G$ using edges of rate $\min\{P(u), P(v)\}$, and let $p_{uv}$ denote such a path. For terminals $u,v \in T$ in the \textsc{Priority NWST} problem, we define $\sigma(u,v)$ to be the weight of a minimum $u$--$v$ path using vertices of rate $\min\{P(u), P(v)\}$ not including the endpoints $u$ and $v$, and similarly define $\sigma_b(u,v)$ to be the weight of a minimum weight vertex-weighted path using vertices of rate $b$, so that $\sigma(u,v) = \sigma_{\min\{P(u), P(v)\}} (u,v)$. In particular, we have $\sigma_b(v,v)=0$. Note that $\sigma$ is symmetric but does not satisfy the triangle inequality, and is not a metric.
Let $\rho$ denote an approximation ratio for the (edge-weighted) Steiner tree problem, and let $\text{STEINER}(n)$ denote the running time of such an approximation algorithm on an $n$-vertex graph. We denote by $\OPT$ the weight of a min-weight PST or PNWST. Lastly, for $n \in \mathbb{Z}^+$, we denote by $[n]$ the set $\{1,2,\ldots,n\}$.

\subsection{Related work}\label{subsection:related}
The Steiner tree problem in graphs has been studied in a wide variety of contexts; see the compendium~\cite{Hauptmann2015}. The (edge-weighted) Steiner tree problem admits a folklore $2\left(1 - \frac{1}{|T|}\right)$--approximation, and is approximable with ratio $\rho = \ln 4 + \varepsilon \approx 1.387$~\cite{Byrka2013}, but NP-hard to approximate with ratio $\frac{96}{95} \approx 1.01$~\cite{Chlebnik2008}. As stated previously, NWST cannot be approximated with ratio $(1-o(1))\ln |T|$ unless $\P = \NP$~\cite{feige1998threshold,KLEIN1995104,dinur2013}, but algorithms with logarithmic approximation ratio exist. Klein and Ravi~\cite{KLEIN1995104} give a $2\ln|T|$--approximation for NWST, which was improved to $1.61 \ln |T|$ and a less practical $(1.35+\varepsilon) \ln |T|$ by Guha and Khuller~\cite{GUHA199957}. Demaine et al.~\cite{demaine2009nwst} give an $O(1)$--approximation for NWST when the input graph $G$ is $H$--minor free, and a 6-approximation when $G$ is planar. Naor et al.~\cite{naor2011online} give a randomized $O(\log n \log^2 |T|)$-approximation algorithm for the online version.

The (edge-weighted) \textsc{Priority Steiner tree} problem and variants thereof have been studied under various other names including Hierarchical Network Design~\cite{current1986}, Multi-Level (or $k$-Level) Network Design~\cite{Balakrishnan1994}, Multi-Tier Tree~\cite{mirchandani1996MTT}, Grade of Service Steiner Tree~\cite{xue2001gosst}, Quality of Service Multicast Tree~\cite{Charikar2004,Karpinski2005}, and Multi-Level Steiner Tree~\cite{MLST2018,ahmed2020kruskalbased}. 
Earlier results on this problem typically consider a small number of priorities or restricted definition of weight~\cite{current1986,Balakrishnan1994}. In the special case where edge weights are proportional to rate, Charikar et al.~\cite{Charikar2004} give the first $O(1)$--approximations with approximation ratios $4\rho$ and $e\rho \approx 4.214$ (with $\rho \approx 1.55$~\cite{robins2000}) independent of the number of priorities $k$. Karpinski et al. \cite{Karpinski2005} give a slightly stronger variant of the $e\rho$–approximation \cite{Charikar2004} which achieves approximation ratio 3.802. Ahmed et al.~\cite{MLST2018} give an approximation ratio of $2.351\rho \approx 3.268$ for $k \le 100$. Xue et al.~\cite{xue2001gosst} consider this problem where the terminals are embedded in the Euclidean plane, and give $\frac{4}{3}\rho$ (resp. $\frac{5+4\sqrt{2}}{7}\rho \approx 1.522\rho$)--approximations for two (resp. three) different priorities. Integer programming formulations have been proposed and evaluated over realistic problem instances~\cite{MLST2018,risso2020qosmt}.

If edge weights are not necessarily proportional to rate, Charikar et al.~\cite{Charikar2004} gave a simple $\min\{2 \ln |T|+2, k\rho\}$-approximation (see Section~\ref{section:PST}), which remains the best known to date. Recently, Ahmed et al.~\cite{ahmed2020kruskalbased} proposed an approximation based on Kruskal's MST algorithm which achieves the same approximation ratio, and provided an experimental study comparing the two methods. Chuzhoy et al.~\cite{Chuzhoy2008} showed that \textsc{Priority Steiner tree} cannot be approximated with ratio $c \log \log n$ for some constant $c$ unless $\NP \subseteq \DTIME(n^{O(\log \log \log n)})$. Angelopoulos~\cite{Angelopoulos2009} showed that every deterministic online algorithm for online \textsc{Priority Steiner tree} has ratio $\Omega(\min\{k \log \frac{|T|}{k}, |T|\})$. Interestingly, no node-weighted variant of \textsc{Priority Steiner tree} has been studied in existing literature. 
However, a related problem is the (single-source) node-weighted buy-at-bulk problem (NSS-BB) studied by Chekuri et al.~\cite{chekuri2010approximation}, who show 
a $3H_{|T|} = O(\log |T|)$--approximation for NSS-BB by giving a randomized algorithm then derandomizing it using an LP relaxation, where $H_n = \frac{1}{1}+\frac{1}{2}+\ldots+\frac{1}{n}$ is the $n^{\text{th}}$ harmonic number.

\subsection{Our results}
In Section~\ref{section:PST}, we strengthen the analysis of the simple $(2 \ln |T| + 2)$-approximation (Algorithm~\ref{alg:qosmt}) by Charikar et al.~\cite{Charikar2004} to show that it is a $\lceil\log_2 |T|\rceil + 1 \le (1.443 \ln |T| + 2)$-approximation. We then give a parallelizable algorithm (Algorithm~\ref{alg:pst}) with the same approximation ratio that does not require that terminals be connected sequentially or in a particular order. This contrasts with the inherently serial Algorithm~\ref{alg:qosmt}~\cite{Charikar2004}, where the shortest path for each terminal depends on the partial PST computed at the previous iteration.

\begin{theorem}
Algorithm~\ref{alg:qosmt}~\cite{Charikar2004} is a $(\lceil \log_2 |T| \rceil + 1)$-approximation for \textsc{Priority Steiner tree} with running time $O(nm + n^2 \log n)$, and there is a parallelizable algorithm for \textsc{Priority Steiner tree} with the same approximation ratio.
\end{theorem}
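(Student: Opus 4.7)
The plan is to strengthen the analysis of Algorithm~\ref{alg:qosmt}~\cite{Charikar2004} via an Imase--Waxman-style dyadic argument, and then exhibit a separate algorithm (Algorithm~\ref{alg:pst}) that avoids the serial dependence of the greedy updates while admitting the same bound. Recall that Algorithm~\ref{alg:qosmt} visits terminals in non-increasing order of priority $P(v_1)\ge \cdots \ge P(v_{|T|})$ and, upon visiting $v_i$, augments the partial tree $\mathcal{T}_{i-1}$ by a minimum-weight path to $v_i$ using edges of rate at least $P(v_i)$, weighted at rate $P(v_i)$. Let $d_i$ denote the weight of this augmentation, so the output tree has weight $\sum_{i=1}^{|T|} d_i$. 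Fix an optimal PST $\mathcal{T}^{\ast}$ of weight $\OPT$; the target is $\sum_i d_i \le (\lceil \log_2 |T|\rceil+1)\,\OPT$.

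The decisive ingredient is a rate-aware triangle inequality: for any $i<j$, since $P(v_i)\ge P(v_j)$, the $v_i$--$v_j$ path $\pi^{\ast}_{v_iv_j}$ inside $\mathcal{T}^{\ast}$ consists of edges of rate at least $P(v_j)$ by feasibility, hence is a valid candidate when $d_j$ is computed greedily. By monotonicity of $w(e,\cdot)$ in the rate, the weight of $\pi^{\ast}_{v_iv_j}$ at rate $P(v_j)$ is at most its contribution to $\OPT$. Consequently $d_j$ is bounded above by the $\mathcal{T}^{\ast}$-cost of $\pi^{\ast}_{v_iv_j}$, which plays the role of the metric triangle inequality in the original Imase--Waxman template.

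With this inequality in hand, I would follow the Imase--Waxman template: sort the $d_i$ in decreasing order as $d^{(1)}\ge d^{(2)}\ge \cdots$ and bound $d^{(2^k)}$. Let $V_k$ index the $2^k$ terminals achieving the largest greedy distances. By the inequality above, every pair $v_a,v_b\in V_k$ satisfies $w_{\mathcal{T}^{\ast}}(\pi^{\ast}_{v_av_b})\ge d^{(2^k)}$; taking a doubled Euler tour of $\mathcal{T}^{\ast}$ and shortcutting it to $V_k$ produces a cycle of total weight at most $2\OPT$ consisting of $2^k$ legs each of weight at least $d^{(2^k)}$. This forces $d^{(2^k)}\le 2\,\OPT/2^k$; grouping the sorted $d_i$ into dyadic blocks and summing yields $\sum_i d_i \le (\lceil \log_2 |T|\rceil+1)\,\OPT$. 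The running time $O(nm+n^2\log n)$ comes from $|T|$ Dijkstra calls with a Fibonacci heap. The main technical obstacle is keeping the shortcutting rate-consistent: the $\mathcal{T}^{\ast}$-path between two successive $V_k$-vertices on the tour must be charged at a rate for which the inequality applies, which I would arrange by orienting the tour so that within each shortcut pair the higher-priority endpoint is encountered first.

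For Algorithm~\ref{alg:pst}, I would eliminate the dependence of step $i$ on $\mathcal{T}_{i-1}$ by precomputing $\sigma(u,v)$ for all $u,v\in T\cup\{s\}$ at rate $\min\{P(u),P(v)\}$ via $|T|+1$ independent Dijkstra runs launched concurrently; the tree is then assembled from the resulting priority metric closure by a parallel Bor\r{u}vka-style contraction in which every component selects a lightest outgoing priority edge each round, and the corresponding shortest paths are substituted back into $G$ at the end. Since the dyadic bound above depends only on pairwise $\sigma$-values and on the sorted list of merger costs, not on any serial order of terminal arrivals, the same analysis transfers and yields the ratio $\lceil \log_2 |T|\rceil+1$. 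Total work remains $O(nm+n^2\log n)$, while the parallel depth reduces to $O((m+n\log n)\log|T|)$.
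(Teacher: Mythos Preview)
Your dyadic argument has a genuine gap. From the Euler tour on the top $2^k$ terminals you correctly extract $d^{(2^k)} \le 2\,\OPT/2^k$, but this is precisely Lemma~\ref{lemma:qosmt-cc} (the Charikar et al.\ bound) specialized to powers of two, and that lemma only yields $\sum_i d^{(i)} \le 2H_{|T|}\,\OPT \approx 2\ln|T|\,\OPT$, not $(\lceil\log_2|T|\rceil+1)\,\OPT$. Dyadic grouping does not help: bounding each term in block $\{2^{k-1}+1,\dots,2^k\}$ by the best available power-of-two estimate gives a block sum of roughly $1.386\,\OPT$ (or $2\,\OPT$ if you only use endpoint bounds), so summing over $\log_2|T|$ blocks reproduces $2\ln|T|$, not $\log_2|T|$. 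Concretely, for $|T|=4$ your inequalities give $d^{(1)}+d^{(2)}+d^{(3)}+d^{(4)} \le 2+1+\tfrac{2}{3}+\tfrac{1}{2} > 3 = (\log_2 4 + 1)$. What you are missing is the pairing step: on the tour through $V_k$, split the $2^k$ legs into two alternating matchings of size $2^{k-1}$; one matching has total weight at most $\OPT$, and within each pair the later-processed terminal has connection cost bounded by that leg. This yields $2^{k-1}$ terminals in $V_k$ whose connection costs \emph{sum} to at most $\OPT$, hence the cheapest half of $V_k$ sums to at most $\OPT$. That is exactly the content of the paper's Lemma~\ref{lemma:pst-cc}, and it is the step that turns the $2\ln|T|$ bound into $\log_2|T|+1$.

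Your parallel algorithm is also different from, and more fragile than, the paper's. The paper simply has each terminal $v$ independently connect to the nearest terminal (or $s$) of \emph{strictly higher} priority via a rate-$P(v)$ shortest path; feasibility is immediate because following these pointers always increases priority and so reaches $s$. Your Bor\r{u}vka-style contraction on the $\sigma$-closure does not enforce this directionality: a component whose highest-priority member has priority $p$ may select as its lightest outgoing edge one to a terminal of priority $p' < p$, realized by a rate-$p'$ path, which does not provide a rate-$p$ route toward $s$ for the high-priority members. You would need to constrain each component's outgoing choice to land on a strictly higher-priority vertex, at which point the scheme collapses to the paper's Algorithm~\ref{alg:pst}. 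Finally, since your analysis for the serial algorithm does not establish the claimed ratio, the assertion that ``the same analysis transfers'' to the parallel variant is unsupported.
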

Moreover, the approximation ratio is tight up to a factor of 2, as there exists an input graph in which Algorithms~\ref{alg:qosmt}--\ref{alg:pst} may output a PST with weight $\frac{1}{2} \log_2 |T| + 1$ times the optimum~\cite{imase91dst}. In Section~\ref{section:greedyPNWST}, we show the following result for  \textsc{Priority NWST}:
\begin{theorem}\label{thm:greedyPNWST}
There exists a $2 \ln (|T|+1)$--approximation algorithm for \textsc{Priority NWST} with running time $O(n^4 k \log n)$.
\end{theorem}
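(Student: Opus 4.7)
The plan is to extend the spider decomposition technique of Klein and Ravi~\cite{KLEIN1995104} to the priority setting. The algorithm maintains a partition $\mathcal{C}$ of $T \cup \{s\}$ into connected subcomponents of the current partial solution, initialized as $|T|+1$ singletons. In each iteration it identifies a \emph{priority spider} of minimum density and merges the components it spans, halting when $|\mathcal{C}|=1$. A priority spider consists of a center $c$ at some rate $b$ together with $t \ge 1$ legs, each being a vertex-weighted shortest path from $c$ to a foot in a distinct component, where internal leg vertices carry rates $\ge$ the foot's required priority and $\le b$. Its density is $w(S)/(t-1)$, the cost per unit reduction in $|\mathcal{C}|$.

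To find a min-density spider efficiently, we iterate over each candidate center $c$ (at most $n$ choices) and each of the $k$ possible center rates $b$. For fixed $(c,b)$, a Dijkstra-style computation respecting per-leg rate constraints yields the cheapest leg from $c$ to every active component in $O(n^2 \log n)$ time. Sorting these leg costs and scanning a prefix (together with $w(c,b)$) determines the optimal number of legs minimizing density. Each iteration thus costs $O(n^3 k \log n)$; since each iteration reduces $|\mathcal{C}|$ by at least one, there are $O(|T|)=O(n)$ iterations in total, yielding the claimed $O(n^4 k \log n)$ running time.

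The approximation guarantee follows from a priority spider decomposition lemma: given any feasible PNWST $\mathcal{T}$ for the current subproblem, $\mathcal{T}$ admits a decomposition into priority spiders whose feet lie in distinct components of $\mathcal{C}$ and whose total cost equals $w(\mathcal{T})$, with each vertex of $\mathcal{T}$ contributing to exactly one spider at its original rate. Averaging over the $\ge |\mathcal{C}|$ feet of the decomposition (with total cost $\le \OPT$) produces a spider with cost per foot $\le \OPT/|\mathcal{C}|$, hence density at most $2\OPT/|\mathcal{C}|$ using $t/(t-1) \le 2$. Plugging this density bound into the standard greedy set-cover charging argument telescopes to give a total solution cost of at most $2\OPT(H_{|T|+1}-1) \le 2\OPT \ln(|T|+1)$.

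The main obstacle is establishing the priority spider decomposition lemma. Unlike the classical NWST case, one must assign rates in the decomposed spiders so that each spider is feasible in isolation: the center's rate must dominate each foot's required rate, and leg vertices must carry rates monotonically between the center's and foot's. Rooting $\mathcal{T}$ at $s$ and invoking the WLOG monotone-rate assumption (Def.~\ref{def:rate-tree}), we carve one spider per branching (or terminal) vertex in a bottom-up pass, assigning each vertex of $\mathcal{T}$ to the unique spider containing it and keeping its rate. The delicate points are verifying that the resulting spiders' feet indeed lie in \emph{distinct} components of $\mathcal{C}$ (using that each $C \in \mathcal{C}$ is already maximally connected in the partial solution, so a single leg cannot enter the same component twice without creating a shortcut contradicting shortest-path optimality) and that leg-rate monotonicity from center outward is preserved; both follow by induction on the tree structure.
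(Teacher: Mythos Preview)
Your plan is the paper's: extend Klein--Ravi by greedily selecting a minimum-density priority spider (a center $c$ at some rate $b$, with legs to representatives of distinct components), prove a priority-aware spider decomposition of the optimum, and run the harmonic-sum charging argument; the $O(n^3 k\log n)$ per-iteration cost and $O(n)$ iterations match as well.

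Two places in your write-up are loose and need tightening before the argument goes through. First, you never say \emph{which} vertex of a component is its foot, nor how a merged component is represented in later rounds. Feasibility hinges on this: if the foot of a component $C$ is some low-priority terminal $u\in C$, the leg at rate $P(u)$ does not serve the higher-priority terminals already inside $C$, so the resulting subgraph is not a valid partial PNWST. The paper resolves this by maintaining for each rate tree an explicit root (its highest-priority member) and by giving the spider \emph{both} a root $r\in M$ and a separate center $v$, with $b\le P(r)$; the root $r$ then becomes the representative of the merged tree. Your center-only description can be patched to this, but the patch is precisely the paper's new ingredient. Second, your justification that decomposition feet lie in distinct components via ``shortest-path optimality'' is a red herring: once the feet are taken to be the current roots $M_i$ (one per component), distinctness is immediate by construction, and the paper's decomposition lemma (on an $M_i$-optimized copy of $\mathcal{T}^*$) handles the rate-monotonicity you allude to without any appeal to optimality of paths.
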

To the best of our knowledge, this is the first approximation algorithm for \textsc{Priority NWST}, and is the main technical contribution of this paper. The analysis extends the spider decomposition of Klein and Ravi~\cite{KLEIN1995104} in their greedy $(2\ln|T|)$--approximation for the NWST problem, to accommodate priority constraints in the \textsc{Priority NWST} problem. Note the additional $+1$ arises as we do not consider the source $s$ a terminal. Moreover, the approximation ratio is tight. 

\section{Priority Steiner Tree: Two logarithmic approximations} \label{section:PST}

We first review the greedy $\min\{2 \ln |T| + 2, k\rho\}$ approximation for \textsc{Priority Steiner tree} given by Charikar et al.~\cite{Charikar2004}. This returns the better solution of two sub-algorithms; we focus primarily on the $(2 \ln |T| + 2)$-approximation (Algorithm~\ref{alg:qosmt}). This algorithm sorts the terminals $T$ from highest to lowest priority. Then for $i=1$, \ldots, $|T|$, the $i^{\text{th}}$ terminal $v_i$ in the sorted list is connected to the existing tree (containing the source $s$) using a minimum weight path of rate $P(v_i)$. The weight of this path is the \emph{connection cost} of $v_i$. Cycles can be removed in the end by removing an edge from each cycle with the lowest rate.

\begin{algorithm}[h!]

\caption{$R(\cdot) = \text{QoSMT}(\text{graph }G, \text{ priorities }P, \text{ edge weights }w, \text{ source } s)$ \cite{Charikar2004}}\label{alg:qosmt}
\begin{algorithmic}[1]
\State Sort terminals $T$ by decreasing priority $P(\cdot)$
\State Initialize $V' = \{s\}$, $R(e) = 0$ for $e \in E$
\For{$i=1,2,\ldots,|T|$}
\State Connect $i^{\text{th}}$ terminal $v_i$ to $V'$ using minimum weight path $p_i$ of rate $P(v_i)$\label{line:qosmt-dijkstra}
\State $R(e) = P(v_i)$ for $e \in p_i$
\State $V' = V' \cup V(p_i)$
\EndFor
\State Remove lowest-rate edge from each cycle
\State \Return{edge rates $R(\cdot)$}
\end{algorithmic}
\end{algorithm}

Algorithm~\ref{alg:qosmt} is based on a $(\log_2 |T|)$-approximation for an online Steiner tree problem analyzed by Imase and Waxman~\cite{imase91dst}; however, Charikar et al.~\cite{Charikar2004} give a simpler analysis which proves a weaker approximation ratio of $2 \ln |T| + 2$, based on the following lemma:
\begin{lemma}[\cite{Charikar2004}]
\label{lemma:qosmt-cc}
For $1 \le x \le |T|$, the $x^{\text{th}}$ most expensive connection cost incurred by Algorithm~\ref{alg:qosmt} is at most $\frac{2\OPT}{x}$.
\end{lemma}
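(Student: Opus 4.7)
The plan is to generalize the Imase--Waxman online-Steiner-tree doubling argument~\cite{imase91dst} to the priority setting. Fix $x \in [|T|]$. Among the $x$ terminals incurring the largest connection costs, index them $u_1, \ldots, u_x$ in the order the algorithm processes them, so that $P(u_1) \ge \ldots \ge P(u_x)$. Write $c(u_l)$ for the connection cost of $u_l$ and let $\alpha = \min_{1 \le l \le x} c(u_l)$; this $\alpha$ is the $x$-th most expensive connection cost, which we wish to upper-bound by $2\OPT/x$. A key observation is that when the algorithm reaches $u_l$, the vertices $s, u_1, \ldots, u_{l-1}$ all already lie in $V'$, so $c(u_l) \le \sigma_{P(u_l)}(u_l, y)$ for every $y \in \{s, u_1, \ldots, u_{l-1}\}$. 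It therefore suffices to find a single index $l$ and a vertex $y \in \{s, u_1, \ldots, u_{l-1}\}$ with $\sigma_{P(u_l)}(u_l, y) \le 2\OPT/x$.

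The main tool is a doubling argument on an optimal PST $\mathcal{T}^*$ of weight $\OPT$. Doubling every edge at the rate assigned to it in $\mathcal{T}^*$ yields an Eulerian multigraph of total weight $2\OPT$; let $W$ be any Euler tour. Since $W$ visits each of the $x+1$ vertices of $X := \{s, u_1, \ldots, u_x\}$ at least once, the cyclic sequence of visits to $X$ partitions $W$ into at least $x+1$ sub-walks whose weights sum to at most $2\OPT$. By averaging, some two consecutive visits $y, y' \in X$ are joined by a sub-walk in the doubled tree of weight at most $2\OPT/(x+1)$.

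To extract the connection-cost bound, designate $u_l$ to be whichever of $\{y, y'\}$ has the smaller priority (if $y = s$, this forces $y' = u_l$); by construction the other endpoint $y$ then lies in $\{s, u_1, \ldots, u_{l-1}\}$. The sub-walk from $y$ to $y'$ is a walk in the tree $\mathcal{T}^*$, so it must traverse each edge of the unique $y$--$y'$ tree path $P_{y,y'}$ in $\mathcal{T}^*$ at least once; consequently the weight of $P_{y,y'}$ at the rates of $\mathcal{T}^*$ is at most $2\OPT/(x+1)$. Every edge of $P_{y,y'}$ lies on the $s$-to-$y$ path or the $s$-to-$y'$ path of $\mathcal{T}^*$, so its rate in $\mathcal{T}^*$ is at least $\min\{P(y), P(y')\} \ge P(u_l)$; by the monotonicity of $w(e, \cdot)$, weighting $P_{y,y'}$ at the uniform rate $P(u_l)$ produces no larger a total. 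Combining, $\sigma_{P(u_l)}(u_l, y) \le 2\OPT/(x+1) \le 2\OPT/x$, which together with the first-paragraph observation gives $\alpha \le c(u_l) \le 2\OPT/x$, as required. The main subtlety, absent in the classical Imase--Waxman setting, is this rate bookkeeping: one must use the structural fact that every edge on $P_{y,y'}$ carries rate $\ge P(u_l)$ in $\mathcal{T}^*$, together with monotonicity of $w(e, \cdot)$, in order to pass between the tree-path weight in $\mathcal{T}^*$ and the priority-aware quantity $\sigma_{P(u_l)}$.
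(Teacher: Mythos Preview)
The paper does not prove Lemma~\ref{lemma:qosmt-cc}; it is quoted from Charikar et al.~\cite{Charikar2004} and used only as a black box to obtain the $2\ln|T|+2$ bound. Your doubling/Euler-tour argument is essentially correct and is precisely the mechanism the paper itself deploys for the closely related Lemma~\ref{lemma:pst-cc}: traverse $\mathcal{T}^*$ (total weight $2\OPT$), look at consecutively visited terminals, and use that every edge on the tree path between two terminals carries rate at least the smaller of their two priorities, so that the tree-path weight upper-bounds the relevant $\sigma$-distance.

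One small gap: when you partition the Euler tour by \emph{all} visits to $X$, two consecutive visits can be to the same vertex (for instance if that vertex has a pendant subtree containing no $X$-vertex), and then ``whichever of $\{y,y'\}$ has the smaller priority'' is undefined and the step fails. The fix, which is exactly what the paper does in its proof of Lemma~\ref{lemma:pst-cc}, is to mark only the \emph{first} visit to each of the $x+1$ vertices of $X$; this yields exactly $x+1$ segments with distinct endpoints that still partition $W$, and averaging gives a segment of weight at most $2\OPT/(x+1)$. A second, cosmetic point: ``whichever has the smaller priority'' should read ``whichever is processed later by Algorithm~\ref{alg:qosmt},'' so that priority ties are resolved consistently with the algorithm's sorted order (ensuring the other endpoint really lies in $\{s,u_1,\ldots,u_{l-1}\}\subseteq V'$). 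With these two tweaks your proof is complete and in fact yields the slightly sharper bound $2\OPT/(x+1)$.
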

Lemma~\ref{lemma:qosmt-cc} implies the weight of the PST is at most $2\OPT\left(\frac{1}{1} + \frac{1}{2} + \ldots + \frac{1}{|T|}\right) = 2\OPT H_{|T|} \le (2\ln |T| + 2) \OPT$. Line~\ref{line:qosmt-dijkstra} can be executed by running Dijkstra's algorithm from $v_i$ with edge weights $w(\cdot, P(v_i))$ until reaching a vertex in $V'$; hence Algorithm~\ref{alg:qosmt} runs in $O(nm + n^2 \log n)$ time.

We strengthen the analysis by Charikar et al.~\cite{Charikar2004} to prove an approximation ratio of $\lceil \log_2 |T|\rceil + 1$, thus matching the result for the online Steiner tree problem~\cite{imase91dst}. Instead of an upper bound on the $x^{\text{th}}$ most expensive connection cost, we establish a bound on the $\frac{|T|}{2}$ \emph{least} expensive connection costs; a similar technique was used in~\cite{marathe1998bicriteria} for a bicriteria diameter-constrained Steiner tree problem. For simplicity, we assume w.l.o.g. $|T|$ is a power of 2; this can be done by adding up to one dummy terminal of priority 1 to each terminal, connected with a zero-weight edge.

\begin{lemma}\label{lemma:pst-cc}
The sum of the $\frac{|T|}{2}$ least expensive connection costs incurred by Algorithm~\ref{alg:qosmt} is at most $\OPT$.
\end{lemma}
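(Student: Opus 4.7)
The plan is to identify a set $S$ of at least $|T|/2$ distinct terminals whose algorithmic connection costs sum to at most $\OPT$. Since the $|T|/2$ smallest connection costs are by definition the smallest among all $|T|/2$-subsets, their sum is dominated by $\sum_{v \in S} c_v \le \OPT$, which gives the lemma.

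To construct $S$, I would first double the optimal PST $\mathcal{T}^*$ and take a closed Euler tour starting and ending at $s$. Marking the first occurrence of each terminal yields a cyclic order $s = u_{\pi(0)}, u_{\pi(1)}, \ldots, u_{\pi(|T|)}, u_{\pi(|T|+1)} = s$ whose total shortcut weight $\sum_{i=0}^{|T|} d_{\mathcal{T}^*}(u_{\pi(i)}, u_{\pi(i+1)})$ is at most $2\OPT$. The crucial inequality I would need is: for each cyclic edge, if $v$ is its later-processed endpoint (treating $s$ as earliest), then $c_v \le d_{\mathcal{T}^*}(u_{\pi(i)}, u_{\pi(i+1)})$. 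This follows by combining three facts: the earlier endpoint has priority at least $P(v)$ (decreasing-priority processing); every edge on the $\mathcal{T}^*$-path between the two endpoints has rate at least $P(v)$ (by the non-increasing-rate property of Def.~\ref{def:rate-tree} applied to $\mathcal{T}^*$); and the monotonicity $w(e, r_1) \le w(e, r_2)$ lets $v$ re-use these edges at rate $P(v)$ without paying more than their $\mathcal{T}^*$-weight.

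Next, I would $2$-color the $|T|+1$ cyclic edges alternately into sets $A$ and $B$, deliberately aligning the coloring so that both edges incident to $s$ fall into $A$. This is possible precisely because $|T|+1$ is odd (the paper already assumes $|T|$ is a power of 2), and it guarantees that every non-source terminal lies in exactly one $A$-edge and exactly one $B$-edge. Because $w(A) + w(B) \le 2\OPT$, the cheaper class has total weight at most $\OPT$. Taking $S$ to be the set of later-processed endpoints over the edges in the cheaper class, these endpoints are pairwise distinct (each terminal is in at most one edge per class), $|S| \ge |T|/2$ in either case ($|A| = |T|/2 + 1$ or $|B| = |T|/2$), and summing the crucial inequality edge-by-edge yields $\sum_{v \in S} c_v \le \OPT$.

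The main obstacle is the priority-aware inequality $c_v \le d_{\mathcal{T}^*}(\cdot,\cdot)$. In the priority-free online Steiner tree setting this is immediate, but for PST it requires simultaneously invoking the decreasing-priority processing order, the non-increasing-rate structure of an optimal tree, and the monotonicity of edge weights in rate --- each of which is easy in isolation but must be dovetailed correctly. Once this inequality is in place, the remaining alternating-partition and subset-comparison argument is essentially bookkeeping.
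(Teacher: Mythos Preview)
Your proposal is correct and follows essentially the same approach as the paper: double $\mathcal{T}^*$, take an Euler/DFS tour, alternately 2-color the consecutive segments, bound each later-processed terminal's connection cost by its segment weight, and take the cheaper color class. The only cosmetic difference is that you include $s$ in the cyclic order (giving $|T|+1$ segments split $|T|/2+1$ vs.\ $|T|/2$) while the paper works with the $|T|$ terminal-to-terminal segments split evenly; one small quibble is that the ``rate at least $P(v)$ on the tree path'' fact you need comes from the PST definition (Def.~\ref{def:PST}), not the node-weighted rate-tree notion of Def.~\ref{def:rate-tree}.
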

\begin{proof}
Let $\mathcal{T}^*$ be a minimum weight PST rooted at $s$ with weight $\OPT$ and edge rates $R^*(\cdot)$. Consider a depth-first traversal of $\mathcal{T}^*$ starting and ending at $s$ (Fig.~\ref{fig:dfs-traversal-example}).
 
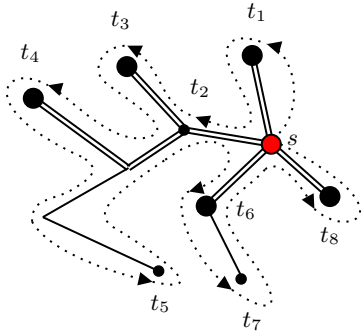
\begin{figure}[h]
\centering
\vspace{-.4cm}\tikzset{every picture/.style={line width=0.75pt}} 

\begin{tikzpicture}[x=0.6pt,y=0.6pt,yscale=-1,xscale=1]

\draw    (371.47,39.69) -- (383.47,95.69)(368.53,40.31) -- (380.53,96.31) ;
\draw [shift={(370,40)}, rotate = 77.91] [color={rgb, 255:red, 0; green, 0; blue, 0 }  ][fill={rgb, 255:red, 0; green, 0; blue, 0 }  ][line width=0.75]      (0, 0) circle [x radius= 3.35, y radius= 3.35]   ;
\draw    (383.03,97.09) -- (342.03,136.09)(380.97,94.91) -- (339.97,133.91) ;
\draw    (383,94.88) -- (420,127.88)(381,97.12) -- (418,130.12) ;
\draw    (381.75,97.48) -- (327.75,88.48)(382.25,94.52) -- (328.25,85.52) ;
\draw    (326.9,88.02) -- (289.9,48.02)(329.1,85.98) -- (292.1,45.98) ;
\draw    (328.83,88.25) -- (292.83,112.25)(327.17,85.75) -- (291.17,109.75) ;
\draw  [fill={rgb, 255:red, 0; green, 0; blue, 0 }  ,fill opacity=1 ] (364,40) .. controls (364,36.69) and (366.69,34) .. (370,34) .. controls (373.31,34) and (376,36.69) .. (376,40) .. controls (376,43.31) and (373.31,46) .. (370,46) .. controls (366.69,46) and (364,43.31) .. (364,40) -- cycle ;
\draw  [fill={rgb, 255:red, 0; green, 0; blue, 0 }  ,fill opacity=1 ] (335,135) .. controls (335,131.69) and (337.69,129) .. (341,129) .. controls (344.31,129) and (347,131.69) .. (347,135) .. controls (347,138.31) and (344.31,141) .. (341,141) .. controls (337.69,141) and (335,138.31) .. (335,135) -- cycle ;
\draw  [fill={rgb, 255:red, 0; green, 0; blue, 0 }  ,fill opacity=1 ] (413,129) .. controls (413,125.69) and (415.69,123) .. (419,123) .. controls (422.31,123) and (425,125.69) .. (425,129) .. controls (425,132.31) and (422.31,135) .. (419,135) .. controls (415.69,135) and (413,132.31) .. (413,129) -- cycle ;
\draw  [fill={rgb, 255:red, 0; green, 0; blue, 0 }  ,fill opacity=1 ] (285,47) .. controls (285,43.69) and (287.69,41) .. (291,41) .. controls (294.31,41) and (297,43.69) .. (297,47) .. controls (297,50.31) and (294.31,53) .. (291,53) .. controls (287.69,53) and (285,50.31) .. (285,47) -- cycle ;
\draw  [fill={rgb, 255:red, 0; green, 0; blue, 0 }  ,fill opacity=1 ] (324,87) .. controls (324,85.34) and (325.34,84) .. (327,84) .. controls (328.66,84) and (330,85.34) .. (330,87) .. controls (330,88.66) and (328.66,90) .. (327,90) .. controls (325.34,90) and (324,88.66) .. (324,87) -- cycle ;
\draw    (232.89,65.79) -- (292.89,109.79)(231.11,68.21) -- (291.11,112.21) ;
\draw  [fill={rgb, 255:red, 0; green, 0; blue, 0 }  ,fill opacity=1 ] (226,67) .. controls (226,63.69) and (228.69,61) .. (232,61) .. controls (235.31,61) and (238,63.69) .. (238,67) .. controls (238,70.31) and (235.31,73) .. (232,73) .. controls (228.69,73) and (226,70.31) .. (226,67) -- cycle ;
\draw    (292,111) -- (238,142) ;
\draw  [fill={rgb, 255:red, 0; green, 0; blue, 0 }  ,fill opacity=1 ] (308,176) .. controls (308,174.34) and (309.34,173) .. (311,173) .. controls (312.66,173) and (314,174.34) .. (314,176) .. controls (314,177.66) and (312.66,179) .. (311,179) .. controls (309.34,179) and (308,177.66) .. (308,176) -- cycle ;
\draw    (341,135) -- (364,181) ;
\draw  [fill={rgb, 255:red, 0; green, 0; blue, 0 }  ,fill opacity=1 ] (360,181) .. controls (360,179.34) and (361.34,178) .. (363,178) .. controls (364.66,178) and (366,179.34) .. (366,181) .. controls (366,182.66) and (364.66,184) .. (363,184) .. controls (361.34,184) and (360,182.66) .. (360,181) -- cycle ;
\draw    (238,142) -- (311,176) ;
\draw  [fill={rgb, 255:red, 255; green, 0; blue, 0 }  ,fill opacity=1 ] (376,96) .. controls (376,92.69) and (378.69,90) .. (382,90) .. controls (385.31,90) and (388,92.69) .. (388,96) .. controls (388,99.31) and (385.31,102) .. (382,102) .. controls (378.69,102) and (376,99.31) .. (376,96) -- cycle ;
\draw  [dash pattern={on 0.84pt off 2.51pt}]  (387,89) .. controls (402.39,73.61) and (397.74,41.25) .. (381.81,33.77) ;
\draw [shift={(379.2,32.8)}, rotate = 375.52] [fill={rgb, 255:red, 0; green, 0; blue, 0 }  ][line width=0.08]  [draw opacity=0] (8.93,-4.29) -- (0,0) -- (8.93,4.29) -- cycle    ;
\draw  [dash pattern={on 0.84pt off 2.51pt}]  (337.29,79.94) .. controls (386.4,97.68) and (367.01,70.4) .. (357.45,50.93) .. controls (347.7,31.05) and (359.2,24.3) .. (379.2,32.8) ;
\draw [shift={(334.2,78.8)}, rotate = 20.67] [fill={rgb, 255:red, 0; green, 0; blue, 0 }  ][line width=0.08]  [draw opacity=0] (8.93,-4.29) -- (0,0) -- (8.93,4.29) -- cycle    ;
\draw  [dash pattern={on 0.84pt off 2.51pt}]  (294.15,35.38) .. controls (313.9,47.19) and (311.28,72.12) .. (334.2,78.8) ;
\draw [shift={(291.2,33.8)}, rotate = 25.56] [fill={rgb, 255:red, 0; green, 0; blue, 0 }  ][line width=0.08]  [draw opacity=0] (8.93,-4.29) -- (0,0) -- (8.93,4.29) -- cycle    ;
\draw  [dash pattern={on 0.84pt off 2.51pt}]  (243.93,60.1) .. controls (255.65,67.14) and (281.12,96.4) .. (303.2,91.8) .. controls (327.2,86.8) and (281.95,62.8) .. (275.58,55.3) .. controls (269.2,47.8) and (275.2,33.8) .. (291.2,33.8) ;
\draw [shift={(241.2,58.8)}, rotate = 16.7] [fill={rgb, 255:red, 0; green, 0; blue, 0 }  ][line width=0.08]  [draw opacity=0] (8.93,-4.29) -- (0,0) -- (8.93,4.29) -- cycle    ;
\draw  [dash pattern={on 0.84pt off 2.51pt}]  (262.2,103.8) .. controls (193.2,72.8) and (213.2,42.8) .. (241.2,58.8) ;
\draw  [dash pattern={on 0.84pt off 2.51pt}]  (231.2,144.8) .. controls (231.2,128.8) and (244.2,132.05) .. (256.2,124.8) .. controls (268.2,117.55) and (276.2,111.8) .. (262.2,103.8) ;
\draw  [dash pattern={on 0.84pt off 2.51pt}]  (231.2,144.8) .. controls (239.98,160.4) and (269.66,174.1) .. (304.51,182.19) ;
\draw [shift={(307.2,182.8)}, rotate = 192.53] [fill={rgb, 255:red, 0; green, 0; blue, 0 }  ][line width=0.08]  [draw opacity=0] (8.93,-4.29) -- (0,0) -- (8.93,4.29) -- cycle    ;
\draw  [dash pattern={on 0.84pt off 2.51pt}]  (307.2,182.8) .. controls (351.2,192.8) and (300.2,158.8) .. (283.2,148.8) .. controls (266.2,138.8) and (287.2,124.8) .. (315.2,105.8) ;
\draw  [dash pattern={on 0.84pt off 2.51pt}]  (315.2,105.8) .. controls (354.4,76.4) and (373.43,111.35) .. (332.76,124.99) ;
\draw [shift={(330.2,125.8)}, rotate = 343.53999999999996] [fill={rgb, 255:red, 0; green, 0; blue, 0 }  ][line width=0.08]  [draw opacity=0] (8.93,-4.29) -- (0,0) -- (8.93,4.29) -- cycle    ;
\draw  [dash pattern={on 0.84pt off 2.51pt}]  (330.2,125.8) .. controls (318.5,128.73) and (328.66,164.92) .. (355.13,186.19) ;
\draw [shift={(357.2,187.8)}, rotate = 216.87] [fill={rgb, 255:red, 0; green, 0; blue, 0 }  ][line width=0.08]  [draw opacity=0] (8.93,-4.29) -- (0,0) -- (8.93,4.29) -- cycle    ;
\draw  [dash pattern={on 0.84pt off 2.51pt}]  (357.2,187.8) .. controls (381.2,204.8) and (394.2,201.8) .. (367.2,165.8) .. controls (340.2,129.8) and (359.2,132.8) .. (378.2,111.8) ;
\draw  [dash pattern={on 0.84pt off 2.51pt}]  (378.2,111.8) .. controls (387.55,104.32) and (400.4,121.32) .. (407.75,134.19) ;
\draw [shift={(409.2,136.8)}, rotate = 241.7] [fill={rgb, 255:red, 0; green, 0; blue, 0 }  ][line width=0.08]  [draw opacity=0] (8.93,-4.29) -- (0,0) -- (8.93,4.29) -- cycle    ;
\draw  [dash pattern={on 0.84pt off 2.51pt}]  (409.2,136.8) .. controls (430.2,156.8) and (471.2,138.8) .. (388,96) ;

\draw (365,5.4) node [anchor=north west][inner sep=0.75pt]    {$t_{1}$};
\draw (329,53.4) node [anchor=north west][inner sep=0.75pt]    {$t_{2}$};
\draw (279,10.4) node [anchor=north west][inner sep=0.75pt]    {$t_{3}$};
\draw (222,31.4) node [anchor=north west][inner sep=0.75pt]    {$t_{4}$};
\draw (304,189.4) node [anchor=north west][inner sep=0.75pt]    {$t_{5}$};
\draw (358,130.4) node [anchor=north west][inner sep=0.75pt]    {$t_{6}$};
\draw (362,200.4) node [anchor=north west][inner sep=0.75pt]    {$t_{7}$};
\draw (411,146.4) node [anchor=north west][inner sep=0.75pt]    {$t_{8}$};
\draw (390,88) node [anchor=north west][inner sep=0.75pt]    {$s$};

\end{tikzpicture}
\caption{Depth-first traversal of $\mathcal{T}^*$, with terminals $t_1$, \ldots, $t_8$ indicated. Terminals $t_1$, $t_3$, $t_4$, $t_6$, $t_8$ have priority 2; the rest have priority 1.\vspace{-.4cm}}
\label{fig:dfs-traversal-example}
\end{figure}

As each edge in $\mathcal{T}^*$ is included twice in the traversal, the total weight of the edges visited in the traversal is $2\OPT$. Consider pairs of consecutive terminals visited for the first time along the traversal, including the pair consisting of the last and the first terminals visited. There are $|T|$ such terminal pairs; suppose these pairs are $(t_1,t_2)$, $(t_2,t_3)$, \ldots, $(t_{|T| - 1}, t_{|T|})$, $(t_{|T|}, t_1)$.

For $i \in [|T|]$, let $p_i^*$ denote the $t_i$--$t_{i+1}$ path in $\mathcal{T}^*$ using edges at their respective rates in $\mathcal{T}^*$, where $t_{|T|+1} := t_1$. Note that every edge in $p_i^*$ necessarily has rate at least $\min\{P(t_i), P(t_{i+1})\}$. Hence, if $c_i := \sum_{e \in p_i^*} w(e, R^*(e))$ denotes the weight of the edges along path $p_i$ in $\mathcal{T}^*$, we have $\sigma(t_i, t_{i+1}) \le c_i$. Further, the sum of the weights of these $|T|$ paths equals the weight of the edges in the traversal; that is, $\sum_{i=1}^{|T|} c_i = 2\OPT$. These observations imply $\sum_{i=1}^{|T|} \sigma(t_i, t_{i+1}) \le 2\OPT$.

Partition the $|T|$ terminal pairs into two disjoint sets $S_1$, $S_2$ of size $\frac{|T|}{2}$ as follows:
\begin{align*}
S_1 &= \{(t_1, t_2), (t_3, t_4), \ldots, (t_{|T|-1}, t_{|T|})\}\\
S_2 &= \{(t_2, t_3), (t_4, t_5), \ldots, (t_{|T|}, t_1)\}.
\end{align*}
Consider a pair $(t_i, t_{i+1}) \in S_1$. As Algorithm~\ref{alg:qosmt} connects terminals in decreasing order of priority, a candidate choice is to connect the lower-priority terminal to the higher-priority terminal, which implies that the connection cost of the lower-priority terminal is at most $\sigma(t_i, t_{i+1})$. Let $C_1 := \sigma(t_1, t_2) + \sigma(t_3, t_4) + \ldots + \sigma(t_{|T|-1}, t_{|T|})$; define $C_2$ similarly with respect to $S_2$. By considering all pairs in $S_1$, there necessarily exist $\frac{|T|}{2}$ terminals whose sum of connection costs is at most $C_1$. Similarly, there exist $\frac{|T|}{2}$ terminals whose sum of connection costs is at most $C_2$. As $C_1 + C_2 = \sum_{i=1}^{|T|} \sigma(t_i, t_{i+1}) \le 2\OPT$, either $C_1 \le \OPT$ or $C_2 \le \OPT$, so there exist $\frac{|T|}{2}$ terminals whose sum of connection costs is at most $\OPT$. $\qed$
\end{proof}

\begin{theorem} \label{thm:log2}
Algorithm~\ref{alg:qosmt} is a $(\lceil \log_2 |T|\rceil + 1)$-approximation for \textsc{Priority Steiner tree}.
\end{theorem}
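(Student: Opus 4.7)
The plan is to iteratively apply a slightly strengthened version of Lemma~\ref{lemma:pst-cc} to geometrically shrinking subsets of terminals. Since the cycle-removal step at the end of Algorithm~\ref{alg:qosmt} never increases the PST weight, it suffices to bound the total connection cost. I would assume w.l.o.g.\ that $|T| = 2^\ell$ is a power of 2 (padding with dummy priority-1 terminals attached via zero-weight edges otherwise), and let $c_1 \ge c_2 \ge \cdots \ge c_{|T|}$ denote the connection costs incurred by the algorithm, sorted in decreasing order.

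The first step is to establish the following generalization of Lemma~\ref{lemma:pst-cc}: for any subset $T' \subseteq T$ of even cardinality, the sum of the $|T'|/2$ smallest connection costs among terminals of $T'$ (with respect to the full run of Algorithm~\ref{alg:qosmt}) is at most $\OPT$. The proof is essentially identical to that of Lemma~\ref{lemma:pst-cc}, but marks only the first visits to $T'$-terminals along the DFS traversal of $\mathcal{T}^*$. Forming the cyclic pairs $(t'_1, t'_2), \ldots, (t'_{|T'|-1}, t'_{|T'|}), (t'_{|T'|}, t'_1)$, each tree path $\sigma(t'_i, t'_{i+1})$ for $i<|T'|$ is bounded by a distinct segment of the DFS walk; the wrap-around pair absorbs both the prefix (from $s$ to the first visit of $t'_1$) and the suffix (from the last visit of $t'_{|T'|}$ back to $s$). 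Thus $\sum_i \sigma(t'_i, t'_{i+1}) \le 2\OPT$, and the $S_1, S_2$ parity-splitting argument from the original proof then yields $|T'|/2$ terminals in $T'$ whose connection costs sum to at most $\OPT$, which of course bounds the $|T'|/2$ smallest.

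Armed with this generalization, I would decompose the sum $\sum_i c_i$ dyadically. For each $j = 1, 2, \ldots, \ell$, let $T_j$ denote the set of $2^j$ terminals realizing $c_1, \ldots, c_{2^j}$; applying the generalized lemma to $T_j$ gives $c_{2^{j-1}+1} + \cdots + c_{2^j} \le \OPT$. The remaining term $c_1$ is bounded by $\OPT$ directly, since any connection cost in Algorithm~\ref{alg:qosmt} is at most the weight of the corresponding $s$-to-terminal path in $\mathcal{T}^*$, which in turn is at most $\OPT$. Summing across the $\ell + 1$ blocks $\{c_1\}, \{c_2\}, \{c_3, c_4\}, \ldots, \{c_{|T|/2+1}, \ldots, c_{|T|}\}$ yields $\sum_i c_i \le (\ell + 1)\OPT = (\log_2 |T| + 1)\OPT$, which gives $\lceil \log_2 |T| \rceil + 1$ after removing the power-of-2 assumption.

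The main obstacle is verifying the telescoping inequality $\sum_i \sigma(t'_i, t'_{i+1}) \le 2\OPT$ in the generalized lemma when many non-$T'$ terminals may intervene between consecutive $T'$-terminals in the DFS traversal. The key observation is that the wrap-around pair $(t'_{|T'|}, t'_1)$ can be bounded by walking from $t'_{|T'|}$ back to $s$ and then forward to $t'_1$, which together cover exactly the two DFS walk segments (the pre-$t'_1$ prefix and the post-$t'_{|T'|}$ suffix) not already accounted for by the non-wrap-around pairs, making the total telescope to the full DFS walk weight $2\OPT$.
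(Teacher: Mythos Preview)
Your proposal is correct and follows essentially the same approach as the paper: iteratively apply Lemma~\ref{lemma:pst-cc} to the dyadically shrinking set of terminals with the largest remaining connection costs, bounding each of the $\lceil \log_2 |T|\rceil + 1$ blocks by $\OPT$. The paper phrases the recursion as ``apply Lemma~\ref{lemma:pst-cc} to the minimum subtree of $\mathcal{T}^*$ spanning the remaining terminals'' rather than as an explicit subset-generalization of the lemma, but the content is the same; your version is simply more explicit about why the subset version holds (the DFS--segment telescoping and the separate handling of $c_1$).
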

\begin{proof}
By Lemma~\ref{lemma:pst-cc}, the sum of the $\frac{|T|}{2}$ cheapest connection costs is at most $\OPT$. Consider the remaining $\frac{|T|}{2}$ terminals with the most expensive connection costs, as well as the minimum subtree of $\mathcal{T}^*$ spanning these terminals. Applying Lemma~\ref{lemma:pst-cc} again, the sum of the next $\frac{|T|}{4}$ cheapest connection costs (out of these remaining terminals) is at most $\OPT$. We can apply Lemma~\ref{lemma:pst-cc} $\lceil\log_2 |T|\rceil + 1$ times to obtain the result. $\qed$
\end{proof}
In the following, we give a simpler, parallelizable algorithm for \textsc{Priority Steiner tree} which achieves the same approximation ratio of $\lceil\log_2 |T|\rceil+1$. For simplicity we assume $P(s)=\infty$ and every (non-source) terminal has a different priority; ties between terminals of the same priority can be broken arbitrarily. The idea is to connect each terminal $v$ to the ``closest'' terminal or source with a greater priority than $v$. Specifically, for $v \in T$, find a vertex $u \in T \cup \{s\}$ with $P(u) > P(v)$ which minimizes $\sigma(u,v)$, and connect $v$ to $u$ with edges of rate $P(v)$. This can be done by executing Dijkstra's algorithm from $v$ using edge weights $w(\cdot, P(v))$ and stopping once we find a vertex with a greater priority than $v$. Moreover, this algorithm is parallelizable as the corresponding path for each terminal can be found in parallel. The weight of connecting $v$ to its parent $u$ is the \emph{connection cost} of $v$. As before, cycles can be removed in the end by removing an edge from each cycle with the lowest rate.

\begin{algorithm}[h!]
\caption{$R(\cdot) = \text{PST}(\text{graph }G, \text{ priorities }P, \text{ edge weights }w, \text{ source } s)$}\label{alg:pst}
\begin{algorithmic}[1]
\State Initialize $R(e) = 0$ for $e \in E$
\For{$v \in T$}
\State Find $u \in T \cup \{s\}$ with $P(u) > P(v)$ such that $\sigma(u,v)$ is minimized
\State $R(e) = \max\{R(e), P(v)\}$ for $e \in p_{vu}$
\EndFor
\State Remove lowest-rate edge from each cycle
\State \Return{edge rates $R(\cdot)$}
\end{algorithmic}
\end{algorithm}
Algorithm~\ref{alg:pst} produces a valid PST which spans all terminals, since there is a path from each terminal $v$ to the source using edges of rate $P(v)$ or higher. Moreover, Lemma~\ref{lemma:qosmt-cc} and Theorem~\ref{thm:log2} extend easily:

\begin{lemma}
The sum of the $\frac{|T|}{2}$ least expensive connection costs incurred by Algorithm~\ref{alg:pst} is at most $\OPT$.
\end{lemma}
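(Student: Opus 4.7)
The plan is to replay the proof of Lemma~\ref{lemma:pst-cc} essentially verbatim, observing that the only property of Algorithm~\ref{alg:qosmt} actually used there was the following: within any pair of terminals $(t_i, t_{i+1})$, the lower-priority terminal can be connected at cost at most $\sigma(t_i, t_{i+1})$ by routing through its higher-priority partner. Algorithm~\ref{alg:pst} satisfies this by construction, since for each $v \in T$ it explicitly selects $u \in T \cup \{s\}$ with $P(u) > P(v)$ minimizing $\sigma(u, v)$, so any fixed higher-priority terminal is a valid witness. In fact Algorithm~\ref{alg:pst} is strictly more permissive than Algorithm~\ref{alg:qosmt}: no sequential book-keeping of a partially built tree is required.

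First I would fix an optimal PST $\mathcal{T}^*$ of weight $\OPT$ with edge rates $R^*(\cdot)$, perform a depth-first traversal of $\mathcal{T}^*$ starting and ending at $s$, and list the terminals $t_1, \ldots, t_{|T|}$ in the order they are first visited. For the cyclic pairs $(t_i, t_{i+1})$ (with $t_{|T|+1} := t_1$), the $t_i$--$t_{i+1}$ path in $\mathcal{T}^*$ uses edges of rate at least $\min\{P(t_i), P(t_{i+1})\}$, since the path from $s$ to either endpoint in $\mathcal{T}^*$ already requires this rate. Hence $\sigma(t_i, t_{i+1})$ is bounded by the rated weight of this tree path, and because every edge of $\mathcal{T}^*$ is crossed twice by the traversal, summing yields $\sum_{i=1}^{|T|} \sigma(t_i, t_{i+1}) \le 2\OPT$, exactly as in Lemma~\ref{lemma:pst-cc}.

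Next I would partition the $|T|$ pairs into $S_1 = \{(t_1,t_2), (t_3,t_4), \ldots\}$ and $S_2 = \{(t_2,t_3), \ldots, (t_{|T|},t_1)\}$ of size $|T|/2$ each. Within a single $S_j$ the pairs are vertex-disjoint, so the collection of lower-priority terminals across the pairs of $S_j$ consists of $|T|/2$ distinct terminals, each with connection cost in Algorithm~\ref{alg:pst} at most $\sigma(t_i, t_{i+1})$ for its own pair. Setting $C_j := \sum_{(t_i,t_{i+1}) \in S_j} \sigma(t_i, t_{i+1})$, we obtain $C_1 + C_2 \le 2\OPT$, hence $\min(C_1, C_2) \le \OPT$, which produces the required $|T|/2$ terminals.

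The only small technical check is that the distinct-priorities assumption already imposed for Algorithm~\ref{alg:pst} guarantees that the \emph{lower-priority partner} is well-defined in every pair, so that the strict inequality $P(u) > P(v)$ demanded by Algorithm~\ref{alg:pst} is always satisfiable by the higher-priority partner. I do not anticipate any deeper obstacle; all the combinatorics of the DFS traversal, the parity partition, and the averaging step are inherited unchanged from the proof of Lemma~\ref{lemma:pst-cc}.
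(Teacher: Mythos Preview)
Your proposal is correct and takes essentially the same approach as the paper, which simply states that the lemma ``is proved in the same way'' as Lemma~\ref{lemma:pst-cc}. Your observation that the only property needed is the bound on the lower-priority terminal's connection cost via its higher-priority partner, together with the distinct-priorities assumption already made for Algorithm~\ref{alg:pst}, is exactly the point.
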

This is proved in the same way as Lemma~\ref{lemma:qosmt-cc}.
\begin{theorem}\label{thm:log2-pst}
Algorithm~\ref{alg:pst} is a $(\lceil\log_2 |T|\rceil + 1)$-approximation for \textsc{Priority Steiner tree}.
\end{theorem}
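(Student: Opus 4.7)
The plan is to mirror the proof of Theorem~\ref{thm:log2}, which applies Lemma~\ref{lemma:pst-cc} iteratively a total of $\lceil \log_2 |T| \rceil + 1$ times. The preceding lemma already asserts that the $|T|/2$ cheapest connection costs incurred by Algorithm~\ref{alg:pst} sum to at most $\OPT$, so the main tasks are (i) to verify that lemma and (ii) to show that the iterative argument of Theorem~\ref{thm:log2} transfers cleanly to the new algorithm, which connects terminals in parallel rather than sequentially.

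For (i), I would reuse the proof of Lemma~\ref{lemma:pst-cc} almost verbatim. Consider a depth-first traversal of an optimal PST $\mathcal{T}^*$ starting and ending at $s$, and let $(t_1, t_2), \ldots, (t_{|T|}, t_1)$ be the pairs of consecutive terminals visited for the first time; the total weight of the paths $p_i^*$ in $\mathcal{T}^*$ between these pairs is $2 \OPT$. For each pair $(t_i, t_{i+1})$, let $v$ be the lower-priority endpoint (priorities are distinct by assumption). The key point specific to Algorithm~\ref{alg:pst} is that the other endpoint of the pair is always a valid candidate parent for $v$, since it lies in $T \cup \{s\}$ and has strictly greater priority; hence $v$'s actual connection cost is at most $\sigma(t_i, t_{i+1})$. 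Partitioning the pairs into $S_1$ and $S_2$ and invoking $C_1 + C_2 \le 2\OPT$ then produces a set of $|T|/2$ terminals whose connection costs sum to at most $\OPT$.

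For (ii), I would peel off these $|T|/2$ cheapest terminals and recurse on the remaining ones, taking as the new reference tree the minimum subtree of $\mathcal{T}^*$ spanning the surviving terminals together with $s$, whose weight is at most $\OPT$. After $\lceil \log_2 |T| \rceil + 1$ rounds every terminal has been charged to some round, so the total connection cost is at most $(\lceil \log_2 |T| \rceil + 1)\OPT$; the final cycle-removal step cannot increase the weight.

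The main subtlety, and my candidate for the hardest step, is that the connection costs in Algorithm~\ref{alg:pst} are determined by the \emph{original} (parallel) execution and do not change as we restrict attention to a subset of terminals in the recursive argument. This turns out to be harmless: when we restrict to a subtree spanning some subset of terminals, each pair $(t_i, t_{i+1})$ in its DFS traversal still consists of elements of the original $T \cup \{s\}$, so the higher-priority endpoint remains a candidate parent for the lower-priority one in the actual algorithm. Thus $\sigma(t_i, t_{i+1})$ continues to upper bound the actual connection cost of the lower-priority terminal, and the recursion goes through exactly as in the proof of Theorem~\ref{thm:log2}. $\qed$
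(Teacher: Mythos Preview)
Your proposal is correct and matches the paper's approach exactly: the paper simply states that the analogue of Lemma~\ref{lemma:pst-cc} holds for Algorithm~\ref{alg:pst} by the same DFS-traversal argument (the higher-priority endpoint of each pair is a valid candidate parent), and that Theorem~\ref{thm:log2} then carries over verbatim. Your added remark that the recursion is sound because candidate parents remain in $T\cup\{s\}$ even after restricting to a subset of terminals is precisely the observation needed, and is implicit in the paper's ``extends easily.''
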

One main difference compared to Algorithm~\ref{alg:qosmt}~\cite{Charikar2004} is that Algorithm~\ref{alg:pst} is not required to connect the terminals sequentially, or even by order of priority. Further, unlike Algorithm~\ref{alg:qosmt}, Algorithm~\ref{alg:pst} is not dependent on the solution computed at the previous iteration. If $k \ll |T|$, a simple $k\rho$-approximation given by Charikar et al.~\cite{Charikar2004} is to compute a $\rho$-approximate Steiner tree over the terminals of each priority separately, taking $O(k \cdot \text{STEINER}(n))$ time. Executing both approximations and taking the better of the two solutions yields a $\min\{\lceil \log_2 |T|\rceil + 1, k\rho\}$-approximation as desired. The approximation ratios given in Theorem~\ref{thm:log2}--\ref{thm:log2-pst} are tight up to a factor of 2, even if $k=1$. Imase and Waxman~\cite{imase91dst} provide a sequence $(G_i)$ of graphs for which Algorithms~\ref{alg:qosmt}-\ref{alg:pst} may return a PST with weight $\frac{1}{2} \log_2 |T| + 1$ times the optimum; this was also analyzed recently in~\cite{ahmed2020kruskalbased}.

\section{An $O(\log |T|)$-approximation for Priority NWST} \label{section:greedyPNWST}
We remark that the analysis of Algorithms~\ref{alg:qosmt}-\ref{alg:pst}  does not extend to \textsc{Priority NWST}; one can construct an example input graph in which Algorithm~\ref{alg:qosmt} or~\ref{alg:pst} (considering minimum weight node-weighted paths) returns a poor NWST with weight $\Omega(|T|)\OPT$. In this section, we extend the $(2\ln |T|)$-approximation by Klein and Ravi~\cite{KLEIN1995104} which maintains a collection of trees, and greedily merges a subset of these trees at each iteration to minimize a cost-to-connectivity ratio (Algorithm~\ref{alg:greedyPNWST}). For \textsc{Priority NWST}, we need to ensure that the priority constraint is always maintained throughout the construction process. To this end, we first define a {\em rate tree}:

\begin{definition}[Rate tree]\label{def:rate-tree}
Let $G=(V,E)$, and let $\mathcal{T}_r$ be a subtree of $G$ (not necessarily a Steiner or spanning tree of $G$) which includes vertex $r$. Let $R:V \to \mathbb{R}_{\ge 0}$ be a function which assigns rates to the vertices in $G$. We say that $\mathcal{T}_r$ is a \emph{rate tree} rooted at $r$ if, for all $v \in V(\mathcal{T}_r) \setminus \{r\}$, the path from $r$ to $v$ in $\mathcal{T}_r$ consists of vertices of non-increasing rate.
\end{definition}



The main idea of Algorithm~\ref{alg:greedyPNWST} is to maintain a set (not necessarily a forest) of rate trees. By simply connecting the {\em roots} of the rate trees with paths of appropriate vertex rates, we can satisfy the priority constraints.


Another challenge to tackle involves properly devising a definition of weight when greedily merging rate trees at each iteration. The greedy NWST algorithm by Klein and Ravi~\cite{KLEIN1995104} simply sums the weights from a root vertex to each terminal. In our algorithm, we cannot simply connect the root of a rate tree to other roots of other rate trees of lower or equal priority and compute the weight similarly. This is due to a technical challenge needed for the analysis of the algorithm (see Section \ref{sec:analysisAlg2}) that it is not possible, in general, to perform a spider decomposition (similar to \cite{KLEIN1995104}) on a rate tree such that paths from the center to leaves have non-increasing rates. To overcome this challenge, we introduce the notion of {\em rate spiders} and prove the existence of a {\em rate spider decomposition}, which further guides us to properly define weight computations at each iterative step.

\subsection{Algorithm description}
In the following, let $p_1 < p_2 < \ldots < p_k$ denote the $k$ vertex priorities. Initialize a set $\mathcal{F}$ (not necessarily a forest) of $|T|+1$ rate trees so that each terminal $v \in T$, including the source $s$, is a singleton rate tree whose root is itself. Initialize vertex rates $R(v) = P(v)$ for $v \in T$, $R(s) = p_k$, and $R(v) = 0$ for $v \not\in T \cup \{s\}$. While $|\mathcal{F}| > 1$, the construction proceeds iteratively as follows. Each iteration consists of greedily selecting the following:
\begin{itemize}
    \item a rate tree $\mathcal{T}_r \in \mathcal{F}$ rooted at $r$, called the \emph{root tree}
    \item a special vertex $v \in V$ called the \emph{center} (note $v$ could equal $r$)
    \item a real number $b \le P(r)$ representing the rate which $v$ is  ``upgraded'' to
    \item a nonempty subset $\mathcal{S} = \{\mathcal{T}_{r_1}, \ldots, \mathcal{T}_{r_{|\mathcal{S}|}}\} \subset \mathcal{F}$ of rate trees where $\mathcal{T}_r \not\in \mathcal{S}$, and $P(r_j) \le b$ for all roots $r_j$ associated with the rate trees in $\mathcal{S}$
\end{itemize}
By connecting $r$ to the center $v$ using vertices of rate $b$, upgrading $R(v)$ to $b$, then connecting $v$ to the root of each rate tree $\mathcal{T}_{r_j} \in \mathcal{S}$ using vertices of rate $P(r_j)$, we can replace the $|\mathcal{S}|+1$ rate trees in $\mathcal{F}$ with a new rate tree $\mathcal{T}^{\text{new}}_r$ rooted at $r$ (see Figure~\ref{fig:pnwst-iteration-example}).
\begin{figure}[h]
\centering
\tikzset{every picture/.style={line width=0.75pt}} 

\begin{tikzpicture}[x=0.7pt,y=0.7pt,yscale=-.7,xscale=.7]

\draw  [color={rgb, 255:red, 0; green, 0; blue, 0 }  ,draw opacity=0.4 ] (167,63) .. controls (167,51.95) and (193.01,43) .. (225.1,43) .. controls (257.19,43) and (283.2,51.95) .. (283.2,63) .. controls (283.2,74.05) and (257.19,83) .. (225.1,83) .. controls (193.01,83) and (167,74.05) .. (167,63) -- cycle ;
\draw    (167,49) ;
\draw  [fill={rgb, 255:red, 0; green, 0; blue, 0 }  ,fill opacity=1 ] (220,63) .. controls (220,60.18) and (222.28,57.9) .. (225.1,57.9) .. controls (227.92,57.9) and (230.2,60.18) .. (230.2,63) .. controls (230.2,65.82) and (227.92,68.1) .. (225.1,68.1) .. controls (222.28,68.1) and (220,65.82) .. (220,63) -- cycle ;
\draw  [fill={rgb, 255:red, 0; green, 0; blue, 0 }  ,fill opacity=1 ] (414,33) .. controls (414,30.18) and (416.28,27.9) .. (419.1,27.9) .. controls (421.92,27.9) and (424.2,30.18) .. (424.2,33) .. controls (424.2,35.82) and (421.92,38.1) .. (419.1,38.1) .. controls (416.28,38.1) and (414,35.82) .. (414,33) -- cycle ;
\draw  [color={rgb, 255:red, 0; green, 0; blue, 0 }  ,draw opacity=0.4 ] (374.75,33) .. controls (374.75,20.63) and (394.61,10.6) .. (419.1,10.6) .. controls (443.59,10.6) and (463.45,20.63) .. (463.45,33) .. controls (463.45,45.37) and (443.59,55.4) .. (419.1,55.4) .. controls (394.61,55.4) and (374.75,45.37) .. (374.75,33) -- cycle ;
\draw  [fill={rgb, 255:red, 0; green, 0; blue, 0 }  ,fill opacity=1 ] (433,83) .. controls (433,80.18) and (435.28,77.9) .. (438.1,77.9) .. controls (440.92,77.9) and (443.2,80.18) .. (443.2,83) .. controls (443.2,85.82) and (440.92,88.1) .. (438.1,88.1) .. controls (435.28,88.1) and (433,85.82) .. (433,83) -- cycle ;
\draw  [color={rgb, 255:red, 0; green, 0; blue, 0 }  ,draw opacity=0.4 ] (397.5,83) .. controls (397.5,71.95) and (415.68,63) .. (438.1,63) .. controls (460.52,63) and (478.7,71.95) .. (478.7,83) .. controls (478.7,94.05) and (460.52,103) .. (438.1,103) .. controls (415.68,103) and (397.5,94.05) .. (397.5,83) -- cycle ;
\draw  [color={rgb, 255:red, 0; green, 0; blue, 0 }  ,draw opacity=0.4 ] (380.5,125) .. controls (380.5,113.95) and (398.68,105) .. (421.1,105) .. controls (443.52,105) and (461.7,113.95) .. (461.7,125) .. controls (461.7,136.05) and (443.52,145) .. (421.1,145) .. controls (398.68,145) and (380.5,136.05) .. (380.5,125) -- cycle ;
\draw  [fill={rgb, 255:red, 0; green, 0; blue, 0 }  ,fill opacity=1 ] (408.9,127) .. controls (408.9,125.56) and (410.06,124.4) .. (411.5,124.4) .. controls (412.94,124.4) and (414.1,125.56) .. (414.1,127) .. controls (414.1,128.44) and (412.94,129.6) .. (411.5,129.6) .. controls (410.06,129.6) and (408.9,128.44) .. (408.9,127) -- cycle ;
\draw  [fill={rgb, 255:red, 0; green, 0; blue, 0 }  ,fill opacity=1 ] (256,58) .. controls (256,55.18) and (258.28,52.9) .. (261.1,52.9) .. controls (263.92,52.9) and (266.2,55.18) .. (266.2,58) .. controls (266.2,60.82) and (263.92,63.1) .. (261.1,63.1) .. controls (258.28,63.1) and (256,60.82) .. (256,58) -- cycle ;
\draw  [fill={rgb, 255:red, 0; green, 0; blue, 0 }  ,fill opacity=1 ] (284,75) .. controls (284,72.18) and (286.28,69.9) .. (289.1,69.9) .. controls (291.92,69.9) and (294.2,72.18) .. (294.2,75) .. controls (294.2,77.82) and (291.92,80.1) .. (289.1,80.1) .. controls (286.28,80.1) and (284,77.82) .. (284,75) -- cycle ;
\draw    (225.1,63) -- (261.1,58) ;
\draw    (261.1,58) -- (289.1,75) ;
\draw    (289.1,75) -- (316.1,79) ;
\draw  [fill={rgb, 255:red, 0; green, 0; blue, 0 }  ,fill opacity=1 ] (311,79) .. controls (311,76.18) and (313.28,73.9) .. (316.1,73.9) .. controls (318.92,73.9) and (321.2,76.18) .. (321.2,79) .. controls (321.2,81.82) and (318.92,84.1) .. (316.1,84.1) .. controls (313.28,84.1) and (311,81.82) .. (311,79) -- cycle ;
\draw  [fill={rgb, 255:red, 0; green, 0; blue, 0 }  ,fill opacity=1 ] (344,85) .. controls (344,82.18) and (346.28,79.9) .. (349.1,79.9) .. controls (351.92,79.9) and (354.2,82.18) .. (354.2,85) .. controls (354.2,87.82) and (351.92,90.1) .. (349.1,90.1) .. controls (346.28,90.1) and (344,87.82) .. (344,85) -- cycle ;
\draw  [fill={rgb, 255:red, 0; green, 0; blue, 0 }  ,fill opacity=1 ] (359.6,114.8) .. controls (359.6,113.36) and (360.76,112.2) .. (362.2,112.2) .. controls (363.64,112.2) and (364.8,113.36) .. (364.8,114.8) .. controls (364.8,116.24) and (363.64,117.4) .. (362.2,117.4) .. controls (360.76,117.4) and (359.6,116.24) .. (359.6,114.8) -- cycle ;
\draw    (316.1,79) -- (349.1,85) ;
\draw    (349.1,85) -- (362.2,114.8) ;
\draw    (362.2,114.8) -- (411.5,127) ;
\draw  [fill={rgb, 255:red, 0; green, 0; blue, 0 }  ,fill opacity=1 ] (342.1,50.8) .. controls (342.1,47.98) and (344.38,45.7) .. (347.2,45.7) .. controls (350.02,45.7) and (352.3,47.98) .. (352.3,50.8) .. controls (352.3,53.62) and (350.02,55.9) .. (347.2,55.9) .. controls (344.38,55.9) and (342.1,53.62) .. (342.1,50.8) -- cycle ;
\draw    (316.1,79) -- (347.2,50.8) ;
\draw    (347.2,50.8) -- (419.1,33) ;
\draw    (349.1,85) -- (443.2,83) ;
\draw  [color={rgb, 255:red, 0; green, 0; blue, 0 }  ,draw opacity=0.4 ][dash pattern={on 0.84pt off 2.51pt}] (367,34.16) .. controls (367,18.06) and (380.06,5) .. (396.16,5) -- (485.04,5) .. controls (501.14,5) and (514.2,18.06) .. (514.2,34.16) -- (514.2,121.64) .. controls (514.2,137.74) and (501.14,150.8) .. (485.04,150.8) -- (396.16,150.8) .. controls (380.06,150.8) and (367,137.74) .. (367,121.64) -- cycle ;

\draw (150,56.4) node [anchor=north west][inner sep=0.75pt]    {$\mathcal{T}_{r}$};
\draw (219,70.4) node [anchor=north west][inner sep=0.75pt]    {$r$};
\draw (275.1,82.5) node [anchor=north west][inner sep=0.75pt]    {$ \begin{array}{l}
\ \ \ \ \ v\\
( b=2)
\end{array}$};
\draw (426,23.4) node [anchor=north west][inner sep=0.75pt]    {$r_{1}$};
\draw (445,72.4) node [anchor=north west][inner sep=0.75pt]    {$r_{2}$};
\draw (418,117.4) node [anchor=north west][inner sep=0.75pt]    {$r_{3}$};
\draw (419,154.4) node [anchor=north west][inner sep=0.75pt]    {$\mathcal{S}$};
\draw (470,22.4) node [anchor=north west][inner sep=0.75pt]    {$\mathcal{T}_{r_1}$};
\draw (484,74.4) node [anchor=north west][inner sep=0.75pt]    {$\mathcal{T}_{r_2}$};
\draw (465,117.4) node [anchor=north west][inner sep=0.75pt]    {$\mathcal{T}_{r_3}$};

\end{tikzpicture}
\caption{Illustration of an iteration step in Algorithm~\ref{alg:greedyPNWST} with $P(r)=2$, $b=2$, $P(r_1) = P(r_2) = 2$, and $P(r_3) = 1$. Vertices with larger circles (not necessarily terminals) have rate 2; vertices with smaller circles have rate 1.}
\label{fig:pnwst-iteration-example}
\end{figure}
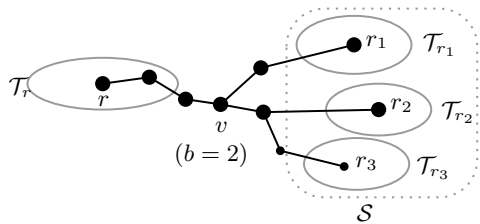

The root tree, center, $b$, and $\mathcal{S}$ are greedily chosen to minimize a cost-to-connectivity ratio $\gamma$, defined as follows:
\begin{equation} \label{eq:gamma}
    \gamma := \frac{1}{|\mathcal{S}|+1}\left(\sigma_b(r,v) + w(v,b) + \sum_{j=1}^{|\mathcal{S}|}\sigma_{P(r_j)}(v, r_j)\right)
\end{equation}
where $r_j$ denotes the root of the $j^{\text{th}}$ rate tree $\mathcal{T}_{r_j}$ in $\mathcal{S}$. 
The second expression $\sigma_b(r,v) + w(v,b) + \sum_{j=1}^{|\mathcal{S}|} \sigma_{P(r_j)}(v,r_j)$ gives an upper bound on the weight of connecting $r$ to $v$, upgrading $R(v)$ to $b$, then connecting $v$ to $|\mathcal{S}|$ roots, and the denominator $|\mathcal{S}|+1$ represents the ``connectivity'', or the number of connected rate trees. Lemma~\ref{lemma:PNWST-polytime} shows how to execute this iteration step in polynomial time.

Once $\mathcal{T}_r$, $v$, $b$, and $\mathcal{S}$ are chosen, we ``upgrade'' the vertex rates $R(\cdot)$ along a shortest $r$--$v$ path to $b$, then upgrade the vertex rates along each shortest $v$--$r_j$ path to $P(r_j)$. In the case that some vertex $u$ is on multiple $v$--$r_j$ paths, then $R(u)$ is upgraded to the maximum over all root priorities $P(r_j)$ for which $u$ appears on the corresponding path. Pseudocode is shown in Algorithm~\ref{alg:greedyPNWST}.

\begin{algorithm}[ht]
\caption{$R(\cdot) = \text{PNWST}(G, \text{ terminals } T, \text{ priorities } P, \text{ vertex weights } w)$}
\begin{algorithmic}[1]
\State Initialize $\mathcal{F}$, $R(v) = P(v)$ if $v \in T \cup \{s\}$ and $R(v) = 0$ if $v \not\in T \cup \{s\}$
\While{$|\mathcal{F}|>1$}
\State Find $\mathcal{T}_r$, $v$, $b$, $\mathcal{S}$ which minimize $\gamma$ (Lemma~\ref{lemma:PNWST-polytime})
\State $R(u) = \max\{R(u), b\}$ for $u$ on $r$--$v$ path \label{line:pnwst-shortest}
\State $R(v) = \max\{R(v), b\}$
\For {$j = 1, \ldots, |\mathcal{S}|$}
\State $R(u) = \max\{R(u), P(r_j)\}$ for $u$ on $v$--$r_j$ path \label{line:pnwst-upgrade}
\EndFor
\State $\mathcal{F} = \mathcal{F} \setminus (\{\mathcal{T}_r\} \cup \mathcal{S})$
\State $\mathcal{F} = \mathcal{F} \cup \{\mathcal{T}^{\text{new}}_r$\}
\EndWhile
\State \Return{vertex rates $R(\cdot)$}
\end{algorithmic}\label{alg:greedyPNWST}
\end{algorithm}

\subsection{Analysis of Algorithm~\ref{alg:greedyPNWST}}\label{sec:analysisAlg2}
We show Theorem~\ref{thm:greedyPNWST} by asserting that Algorithm~\ref{alg:greedyPNWST} is a $2 \ln (|T|+1)$--approximation for \textsc{Priority NWST}. Proofs omitted due to space are in the arXiv version{\color{red}cite}

We extend the spider decomposition given by Klein and Ravi~\cite{KLEIN1995104} to account for the priority constraints in the \textsc{Priority NWST} problem.

\begin{definition}[Spider]
A \emph{spider} is a tree where at most one vertex has degree greater than 2. A \emph{nontrivial spider} is a spider with at least 2 leaves.
\end{definition}
A spider is identified by its \emph{center}, a vertex from which all paths from the center to the leaves of the spider are vertex-disjoint. A foot of a spider is a leaf; if the spider has at least three leaves, then its center is unique and is also a foot. Klein and Ravi~\cite{KLEIN1995104} show that given a graph $G$ and subset $M \subseteq V$ of vertices, $G$ can be decomposed into vertex-disjoint nontrivial spiders such that the union of the feet of the nontrivial spiders contains $M$. We extend the notions of spider and spider decomposition to the \textsc{Priority NWST} problem.

\begin{definition}[Rate spider]
A \emph{rate spider} is a rate tree $\mathcal{X}$ which is also a nontrivial spider. It is identified by a \emph{root} $r$ as well as a \emph{center} $v$ such that:
\begin{itemize}
    \item The root $r$ is either the center or a leaf of $\mathcal{X}$, and the path from $r$ to every vertex in $\mathcal{X}$ uses vertices of non-increasing rate $R(\cdot)$
    \item The paths from the center $v$ to each \emph{non-root} leaf of $\mathcal{X}$ are vertex-disjoint and use vertices of non-increasing rate $R(\cdot)$.
\end{itemize}
\end{definition}
In Figure~\ref{fig:spider-decomposition-1}, right, rate spiders $\mathcal{X}_2$ and $\mathcal{X}_3$ have centers distinct from their roots $r_2$, $r_3$ while $\mathcal{X}_1$ has center $v = r_1$. In Definition~\ref{def:M-optimized}, we supply a notion of a ``minimal'' weight tree with respect to a subset $M$ of vertices.

\begin{definition}[$M$--optimized rate tree] \label{def:M-optimized}
Let $\mathcal{T}_r$ be a rate tree rooted at $r$ with vertex rates $R$. Let $M \subseteq V(\mathcal{T}_r)$ with $r \in M$. Then $\mathcal{T}_r$ is \emph{$M$--optimized} if every leaf of $\mathcal{T}_r$ is in $M$, and if for every vertex $v \in V(\mathcal{T}_r) \setminus M$, we have $R(v) = \max R(w)$ over all vertices $w \in M$ in the subtree of $\mathcal{T}_r$ rooted at $v$.
\end{definition}

We show any $M$--optimized rate tree has a rate spider decomposition.

\begin{lemma}[Rate spider decomposition] \label{lemma:decomposition}
Let $M \subseteq V(\mathcal{T}_r)$ with $|M| \ge 2$, and let $\mathcal{T}_r$ be an $M$--optimized rate tree where $r \in M$. Then $\mathcal{T}_r$ can be decomposed into vertex-disjoint rate spiders $\mathcal{X}_1$, \ldots, $\mathcal{X}_d$ rooted at $r_1$, \ldots, $r_d$ such that:
\begin{itemize}
    \item the leaves and roots of the rate spiders are contained in $M$
    \item every vertex in $M$ is a either a leaf, root, or center of some rate spider
\end{itemize}
\end{lemma}
Figure~\ref{fig:spider-decomposition-1}, right, shows an example of an $M$--optimized rate tree $\mathcal{T}_r$ for $|M|=10$ and a rate spider decomposition $\mathcal{X}_1$, $\mathcal{X}_2$, $\mathcal{X}_3$ over $M$.

\begin{figure}[ht]
\centering
\begin{subfigure}[b]{0.49\textwidth}
\begin{tikzpicture}[scale=0.5, yscale=-1,
every node/.style={circle,fill=black,
draw=black, inner sep=0pt, minimum size=4pt}]
\node[label={{right,above=4pt}:{\footnotesize 3}}, label={{below}:{\footnotesize $r$}}] (r) at (0,0) {};
\node[label={left,above=4pt}:{\footnotesize 2}] (1a) at (-2,1) {};
\node[label={right,above=4pt}:{\footnotesize 3},fill=white] (1b) at (2,1) {};
\node[label={left,above=4pt}:{\footnotesize 2},fill=white] (2a) at (-3,2) {};
\node[label={right,above=4pt}:{\footnotesize 2},fill=white] (2b) at (-2,2) {};
\node[label={right,above=4pt}:{\footnotesize 2}] (2c) at (-1,2) {};
\node[label={left,above=4pt}:{\footnotesize 3},fill=white] (2d) at (1,2) {};
\node[label={right,above=4pt}:{\footnotesize 3},fill=white] (2e) at (2.5,2) {};
\node[label={left,above=4pt}:{\footnotesize 1}] (3a) at (-3.5,3) {};
\node[label={left,above=4pt}:{\footnotesize 2},fill=white] (3b) at (-2.5,3) {};
\node[label={right,above=4pt}:{\footnotesize 2}] (3c) at (-1.5,3) {};
\node[label={left,above=4pt}:{\footnotesize 2}] (3d) at (0.5,3) {};
\node[label={right,above=4pt}:{\footnotesize 3},fill=white] (3e) at (1.5,3) {};
\node[label={right,above=4pt}:{\footnotesize 1},fill=white] (3f) at (3,3) {};
\node[label={left,above=4pt}:{\footnotesize 1}] (4a) at (-2.5,4) {};
\node[label={right,above=4pt}:{\footnotesize 2}] (4b) at (-0.5,4) {};
\node[label={left,above=4pt}:{\footnotesize 1},fill=white] (4c) at (1,4) {};
\node[label={right,above=4pt}:{\footnotesize 1}] (4d) at (2,4) {};
\node[label={right,above=4pt}:{\footnotesize 1}] (4e) at (3.5,4) {};

\draw (r)--(1a); \draw (r)--(1b);
\draw (1a)--(2a); \draw (1a)--(2b); \draw (1a)--(2c);
\draw (1b)--(2d); \draw (1b)--(2e);
\draw (2a)--(3a);
\draw (2b)--(3b);\draw (2b)--(3c);
\draw (2d)--(3d);\draw (2d)--(3e);
\draw (2d)--(3f);
\draw (3c)--(4a);\draw (3c)--(4b);
\draw (3e)--(4c);\draw (3e)--(4d);\draw (3f)--(4e);
\end{tikzpicture}
\end{subfigure}
\begin{subfigure}[b]{0.49\textwidth}
\begin{tikzpicture}[scale=0.5, yscale=-1,
every node/.style={circle,fill=black,
draw=black, inner sep=0pt, minimum size=4pt}]
\node[label={right,above=4pt}:{\footnotesize 3},label={{below}:{\footnotesize $r_3$}}] (r) at (0,0) {};
\node[label={left,above=4pt}:{\footnotesize 2}] (1a) at (-2,1) {};
\node[label={right,above=4pt}:{\footnotesize 2},fill=white] (1b) at (2,1) {};
\node[label={left,above=4pt}:{\footnotesize 1},fill=white] (2a) at (-3,2) {};
\node[label={right,above=4pt}:{\footnotesize 2},fill=white] (2b) at (-2,2) {};
\node[label={right,above=4pt}:{\footnotesize 2}] (2c) at (-1,2) {};
\node[label={left,above=4pt}:{\footnotesize 2},fill=white] (2d) at (1,2) {};

\node[label={left,above=4pt}:{\footnotesize 1}] (3a) at (-3.5,3) {};

\node[label={right,above=4pt}:{\footnotesize 2},label={{below}:{\footnotesize $r_1$}}] (3c) at (-1.5,3) {};
\node[label={left,above=4pt}:{\footnotesize 2},label={{below}:{\footnotesize $r_2$}}] (3d) at (0.5,3) {};
\node[label={right,above=4pt}:{\footnotesize 1},fill=white] (3e) at (1.5,3) {};
\node[label={right,above=4pt}:{\footnotesize 1},fill=white] (3f) at (3,3) {};
\node[label={left,above=4pt}:{\footnotesize 1}] (4a) at (-2.5,4) {};
\node[label={right,above=4pt}:{\footnotesize 2}] (4b) at (-0.5,4) {};

\node[label={right,above=4pt}:{\footnotesize 1}] (4d) at (2,4) {};
\node[label={right,above=4pt}:{\footnotesize 1}] (4e) at (3.5,4) {};

\draw[ultra thick] (r)--(1a); \draw (r)--(1b);
\draw[ultra thick] (1a)--(2a); \draw (1a)--(2b); \draw[ultra thick] (1a)--(2c);
\draw (1b)--(2d);
\draw[ultra thick] (2a)--(3a);
\draw (2b)--(3c);
\draw[ultra thick] (2d)--(3d);\draw[ultra thick] (2d)--(3e);
\draw[ultra thick] (2d)--(3f);

\draw[ultra thick] (3c)--(4a);\draw[ultra thick] (3c)--(4b);
\draw[ultra thick] (3e)--(4d);
\draw[ultra thick] (3f)--(4e);
\node[fill=white,draw=white,minimum size=0pt] at (-2.2,3) {$\mathcal{X}_1$};
\node[fill=white,draw=white,minimum size=0pt] at (2.5,2) {$\mathcal{X}_2$};
\node[fill=white,draw=white,inner sep=0pt,minimum size=0pt] at (-1.5,0.2) {$\mathcal{X}_3$};
\end{tikzpicture}
\end{subfigure}
\caption{\emph{Left:} A rate tree rooted at $r$ with rates $R(\cdot)$ indicated and vertices in $M$ shown in black. \emph{Right:} An $M$--optimized rate tree $\mathcal{T}_r$ and a rate spider decomposition $\mathcal{X}_1$, $\mathcal{X}_2$, $\mathcal{X}_3$ with roots $r_1$, $r_2$, $r_3$.}
\label{fig:spider-decomposition-1}
\end{figure}
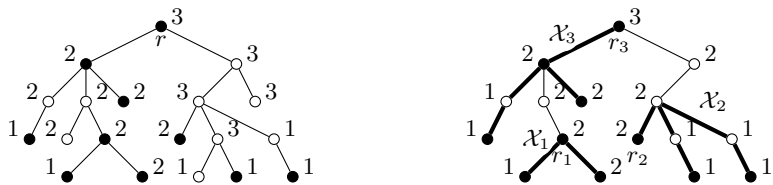

\begin{proof}
We use induction on $|M|$. For base case $|M|=2$, the decomposition consists of a single rate spider $\mathcal{X}_1$ with root $r_1 = r$, namely the $r$--$u$ path in $\mathcal{T}_r$ where $u$ is the other vertex in $M$. Suppose that for some $q \in \{2,\ldots,|V(\mathcal{T}_r)|-1\}$, there exists a rate spider decomposition over any subset $M' \subset V(\mathcal{T}_r)$ with $2 \le |M'| \le q$, so that we wish to show that there exists a decomposition of $\mathcal{T}_r$ over any subset $M$ with $|M|=q+1$.

Given $\mathcal{T}_r$ rooted at $r$ and $M$, find a vertex $u \in V(\mathcal{T}_r)$ furthest from $r$ (by number of edges) with the property that the subtree $\mathcal{T}_u$ rooted at $u$ contains at least two vertices in $M$. If $u=r$, then we claim that $\mathcal{T}_r$ is already a rate spider with root and center $r$, in which a decomposition of $\mathcal{T}_r$ over $M$ is itself. To show this claim, we show that every $w \in V(\mathcal{T}_r)$ with $w \neq r$ has degree at most 2 in $\mathcal{T}_r$. Suppose otherwise there exists some $w \neq r$ with degree at least 3 in $\mathcal{T}_r$. Then the subtree $\mathcal{T}_w$ rooted at $w$ contains at least two leaves which are contained in $M$ (by Def.~\ref{def:M-optimized}), contradicting the choice of $u=r$ since $w$ is further from $r$.

If $u \neq r$, then the subtree $\mathcal{T}_u$ rooted at $u$ is a rate spider $\mathcal{X}$ with center $u$. If $u \in M$, set $u$ as the root of $\mathcal{X}$. If $u \not\in M$, find a vertex $u' \in V(\mathcal{T}_u)$ with rate $R(u)$; such a vertex $u'$ exists by Def.~\ref{def:M-optimized}. Set $u'$ to be the root of $\mathcal{X}$; remove $\mathcal{X}$ from $\mathcal{T}_r$ as well as the edge from $u$ to its parent to produce a smaller rate tree $\mathcal{T}'_r$.

Let $M' = M \cap V(\mathcal{T}'_r)$ be the set of vertices in $M$ which remain in $\mathcal{T}_r$ after removing $\mathcal{T}_u$. If $|M'| = 0$, then we have found a rate spider decomposition of $\mathcal{T}_r$ over $M$. If $|M'| = 1$, then connecting $r$ to $\mathcal{X}$ yields a single rate spider rooted at $r$ with center $u$, which also gives a rate spider decomposition. If $|M'| \ge 2$, then prune the $r$--$u$ path so that $\mathcal{T}'_r$ is $M'$--optimized. By induction hypothesis, $\mathcal{T}'_r$ contains a rate spider decomposition over $M'$. $\qed$
\end{proof}

\begin{corollary}\label{corollary:m}
Let $\mathcal{T}_r$ be an $M$--optimized rate tree. Consider a rate spider decomposition of $\mathcal{T}_r$ over $M$ containing $d$ rate spiders $\mathcal{X}_1$, \ldots, $\mathcal{X}_d$ generated using the method in the proof of Lemma~\ref{lemma:decomposition}. Let $S_j = (M \cap V(\mathcal{X}_j)) \setminus \{r_j\}$ denote the vertices in $M$ contained in $\mathcal{X}_j$, not including its root $r_j$. Then $\sum_{j=1}^d (1 + |S_j|) = |M|$.
\end{corollary}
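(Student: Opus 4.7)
The plan is to show that the sum $\sum_{j=1}^d (1 + |S_j|)$ precisely partitions $M$, by arguing that each vertex of $M$ is counted exactly once: the $d$ distinct roots $r_1,\ldots,r_d$ account for the $d$ ``$+1$'' terms, and the remaining vertices of $M$ are partitioned among the sets $S_1,\ldots,S_d$.

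First, I would invoke Lemma~\ref{lemma:decomposition} to note that each root $r_j$ belongs to $M$, and because the rate spiders $\mathcal{X}_1,\ldots,\mathcal{X}_d$ are vertex-disjoint, the roots are pairwise distinct. Hence $\{r_1,\ldots,r_d\} \subseteq M$ contributes exactly $d$ elements.

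Next, I would establish the set equality $M \setminus \{r_1,\ldots,r_d\} = \bigsqcup_{j=1}^d S_j$, where the union is disjoint. For one direction, take any $m \in M \setminus \{r_1,\ldots,r_d\}$. By Lemma~\ref{lemma:decomposition}, $m$ is a leaf, root, or center of some rate spider $\mathcal{X}_j$; since $m$ is not a root, it is a leaf or center, and the index $j$ is unique by vertex-disjointness. Then $m \in M \cap V(\mathcal{X}_j)$ and $m \neq r_j$, so $m \in S_j$. For the reverse direction, if $m \in S_j$, then $m \in M$ and $m \neq r_j$; by vertex-disjointness, $m \notin V(\mathcal{X}_i)$ for $i \neq j$, so in particular $m \neq r_i$ for any $i$. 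Disjointness of the $S_j$'s is immediate from vertex-disjointness of the rate spiders.

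Combining the two observations gives $|M| = d + \sum_{j=1}^d |S_j| = \sum_{j=1}^d (1+|S_j|)$. There is no real obstacle here: the argument is essentially bookkeeping that leverages the two structural guarantees from Lemma~\ref{lemma:decomposition}, namely that roots and leaves lie in $M$ and that every vertex of $M$ appears as a leaf, root, or center of some spider in the decomposition. The one point to be careful about is to verify that a vertex of $M$ acting as a \emph{center} of some spider (rather than a leaf) is still correctly placed in $S_j$; this follows because the definition of $S_j$ only excludes the root $r_j$, not the center.
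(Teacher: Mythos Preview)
Your proposal is correct and follows essentially the same approach as the paper: both arguments use that the rate spiders are vertex-disjoint, that each root $r_j$ lies in $M$, and that every vertex of $M$ lies in some rate spider, so the sets $\{r_j\} \cup S_j = M \cap V(\mathcal{X}_j)$ partition $M$. The paper states this in two sentences, while you spell out the partition bookkeeping more explicitly, but the content is the same.
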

\begin{proof}
This immediately follows as the $j^{\text{th}}$ rate spider  $\mathcal{X}_j$ contains $1+|S_j|$ vertices in $M$. Since the rate spiders are vertex-disjoint and every vertex in $M$ is contained in some rate spider, it follows that $\sum_{j=1}^d (1 + |S_j|) = |M|$. $\qed$
\end{proof}

For the following lemmas, we define the following notation. For $i \ge 1$, let $\mathcal{F}_i$ denote the set of rate trees at the beginning of iteration $i$ in Algorithm~\ref{alg:greedyPNWST}, and let $M_i$ denote the set of roots of the rate trees in $\mathcal{F}_i$. Hence $|\mathcal{F}_1| = |M_1| = |T|+1$. Let $\mathcal{T}^*$ be a minimum weight PNWST for the instance with weight $\OPT$ and vertex rates $R^*: V(\mathcal{T}^*) \to \{p_1,\ldots,p_k\}$ where $p_1$, \ldots, $p_k$ are the $k$ vertex priorities.

On iteration $i$, let $\mathcal{T}^*_i$ be an $M_i$--optimized rate tree by optimizing $\mathcal{T}^*$ with respect to $M_i$. By Lemma~\ref{lemma:decomposition}, $\mathcal{T}^*_i$ contains a rate spider decomposition over $M_i$ containing $d$ rate spiders $\mathcal{X}_1$, \ldots, $\mathcal{X}_d$. For $j \in [d]$, consider the $j^{\text{th}}$ rate spider $\mathcal{X}_j$ with root $r_j$, center $v_j$, and leaves $S_j \subset M_i$ not including $r_j$. Let $w(\mathcal{X}_j)$ denote the weight of the vertices in $\mathcal{X}_j$ within the optimum solution $\mathcal{T}^*$, given by $w(\mathcal{X}_j) = \sum_{v \in V(\mathcal{X}_j)} w(v, R^*(v))$. On iteration $i$, a candidate choice which Algorithm~\ref{alg:greedyPNWST} could select is according to the $j^{\text{th}}$ rate spider $\mathcal{X}_j$: specifically, select root tree $\mathcal{T}_{r_j}$ rooted at $r_j$, center $v = v_j$, $b = R^*(v_j)$, and $\mathcal{S} \subseteq \mathcal{F}_i$ the set of rate trees whose root is in $S_j$. Let $c_j$ be the weight that Algorithm~\ref{alg:greedyPNWST} computes for this candidate choice (i.e., the second expression in Eq.~\eqref{eq:gamma}). We observe that for all rate spiders $\mathcal{X}_j$ in a rate spider decomposition of $\mathcal{T}_i^*$, we have $c_j \le w(\mathcal{X}_j)$; this follows as the computed weight $c_j$ considers the minimum weight vertex-weighted paths between $r_j$ and $v_j$, as well as from $v_j$ to each leaf in $S_j$.

Let $h_i \ge 2$ denote the number of rate trees in $\mathcal{F}_i$ which are selected on iteration $i$ of Algorithm~\ref{alg:greedyPNWST} (i.e., $h_i = |\mathcal{S}|+1$). Let $\Delta C_i$ denote the actual weight incurred on iteration $i$ by upgrading vertex rates in line~\ref{line:pnwst-upgrade}. Let $\gamma_i$ denote the minimum cost-to-connectivity ratio (Eq.~\eqref{eq:gamma}) computed by Algorithm~\ref{alg:greedyPNWST} on iteration $i$.

\begin{lemma}\label{lemma:iteration-i}
For each iteration $i$ of Algorithm~\ref{alg:greedyPNWST}, we have $\dfrac{\Delta C_i}{h_i} \le \dfrac{\OPT}{|\mathcal{F}_i|}$.
\end{lemma}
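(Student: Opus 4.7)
The plan is to establish $\Delta C_i \le \gamma_i \cdot h_i$ and $\gamma_i \le \OPT/|\mathcal{F}_i|$ separately; combining them yields the stated bound. The first inequality follows from the algorithm's upgrade rule: every vertex $u$ whose rate actually changes in iteration $i$ ends up at rate at most $b$ (since $P(r_\ell) \le b$ for every $\ell$), so its incremental weight contribution $w(u, R_{\text{new}}(u)) - w(u, R_{\text{old}}(u))$ is at most $w(u, b)$ for $u$ on the $r$-$v$ path and at most $w(u, P(r_\ell))$ for $u$ on a $v$-$r_\ell$ path. Summing path-by-path (which only overcounts vertices lying on several paths) gives $\Delta C_i \le \sigma_b(r,v) + w(v,b) + \sum_\ell \sigma_{P(r_\ell)}(v, r_\ell) = c_i = \gamma_i \cdot h_i$.

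For the second inequality I would build candidate algorithm choices from a rate spider decomposition of the optimum. First form the $M_i$-optimized tree $\mathcal{T}^*_i$ of $\mathcal{T}^*$: because $\mathcal{T}^*$ is a rate tree, at every vertex the new rate $\max_{w \in M_i \cap \text{subtree}} R^*(w)$ is at most the original $R^*(\cdot)$, so $w(\mathcal{T}^*_i) \le w(\mathcal{T}^*) = \OPT$. Lemma~\ref{lemma:decomposition} then decomposes $\mathcal{T}^*_i$ into vertex-disjoint rate spiders $\mathcal{X}_1, \ldots, \mathcal{X}_d$ whose roots $r_j$ and non-root leaves $S_j$ all lie in $M_i$, with $\sum_{j=1}^d (|S_j|+1) = |M_i| = |\mathcal{F}_i|$ by Corollary~\ref{corollary:m}.

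For each rate spider $\mathcal{X}_j$ I would use the candidate described before the lemma: root tree $\mathcal{T}_{r_j}$, center $v_j$, upgrade rate $b = R^*(v_j)$, and $\mathcal{S} = \{\mathcal{T}_{r_\ell} : r_\ell \in S_j\}$, with connectivity $h_j = |S_j|+1$. To show $c_j \le w(\mathcal{X}_j)$ I would invoke the non-increasing-rate property of a rate spider: along the $r_j$-$v_j$ path, every interior vertex $u$ has $R^*(u) \ge R^*(v_j) = b$, so $\sigma_b(r_j, v_j)$ is at most the sum of those vertices' weights in $\mathcal{T}^*$; along each $v_j$-$r_\ell$ path, interior rates are at least $R^*(r_\ell) \ge P(r_\ell)$, so $\sigma_{P(r_\ell)}(v_j, r_\ell)$ is bounded similarly; finally $w(v_j, b) \le w(v_j, R^*(v_j))$. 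Adding these contributions and using that distinct paths in a spider share no interior vertex gives $c_j \le w(\mathcal{X}_j)$.

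Since $\gamma_i$ is the minimum ratio over all valid algorithm choices, the averaging inequality
\[
\gamma_i \;\le\; \min_j \frac{c_j}{h_j} \;\le\; \frac{\sum_j c_j}{\sum_j h_j} \;\le\; \frac{\sum_j w(\mathcal{X}_j)}{|\mathcal{F}_i|} \;\le\; \frac{w(\mathcal{T}^*_i)}{|\mathcal{F}_i|} \;\le\; \frac{\OPT}{|\mathcal{F}_i|}
\]
combined with $\Delta C_i \le \gamma_i \cdot h_i$ completes the proof. The main obstacle is the cost bound $c_j \le w(\mathcal{X}_j)$: it requires handling the two cases in the rate spider definition (root coinciding with the center versus root being a distinct leaf of the spider), verifying that the chosen tuple $(r_j, v_j, b, \mathcal{S})$ actually satisfies the algorithm's feasibility constraints, and invoking rate-monotonicity in the correct direction along each segment of the spider.
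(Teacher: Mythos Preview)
Your proposal is correct and follows essentially the same route as the paper's proof: both split the bound into $\Delta C_i/h_i \le \gamma_i$ (overcounting of shared vertices) and $\gamma_i \le \OPT/|\mathcal{F}_i|$ via the rate spider decomposition of the $M_i$-optimized optimum, the candidate choice $(\mathcal{T}_{r_j}, v_j, R^*(v_j), S_j)$ per spider, the bound $c_j \le w(\mathcal{X}_j)$, and the mediant/averaging inequality together with Corollary~\ref{corollary:m}. You spell out the argument for $c_j \le w(\mathcal{X}_j)$ in more detail than the paper (which states it as an observation preceding the lemma), but the underlying structure is the same.
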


\begin{proof}
Consider the $M_i$--optimized rate tree obtained from $\mathcal{T}^*$. By Lemma~\ref{lemma:decomposition}, there exists a rate spider decomposition over $M_i$ containing $d \ge 1$ rate spiders $\mathcal{X}_1$, \ldots, $\mathcal{X}_d$.

For each $j \in [d]$, as Algorithm~\ref{alg:greedyPNWST} seeks to minimize $\gamma$, we have by the above observation that $c_j \le w(\mathcal{X}_j)$:
\begin{equation} \label{eq:gamma-i}
\gamma_i \le \frac{c_j}{1+|S_j|} \le \frac{w(\mathcal{X}_j)}{1+|S_j|}.
\end{equation}
We note that $\frac{\Delta C_i}{h_i} \le \gamma_i$; this follows as the computed weight in Algorithm~\ref{alg:greedyPNWST} may overcount vertex weights appearing on multiple center-to-root paths. This observation, combined with inequality~\eqref{eq:gamma-i}, implies $\frac{\Delta C_i}{h_i} \le \frac{w(\mathcal{X}_j)}{1 + |S_j|}$ for all rate spiders $\mathcal{X}_j$ in the decomposition.

We use the simple algebraic fact that for non-negative numbers $a$, $x_1$, \ldots, $x_d$, $y_1$, \ldots, $y_d$ where the $y_j$'s are nonzero, if $a \le \frac{x_j}{y_j}$ for all $j \in [d]$, then $a \le (\sum_{j=1}^d x_j)/(\sum_{j=1}^d y_j)$. This fact is easily verified by writing $ay_j \le x_j$, then summing from $j=1$ to $j=d$. Applying this fact, we obtain
\begin{equation}
\frac{\Delta C_i}{h_i} \le \gamma_i \le \frac{\sum_{j=1}^d w(\mathcal{X}_j)}{\sum_{j=1}^d 1+|S_j|} \le \frac{\OPT}{|\mathcal{F}_i|}
\end{equation}
where the last inequality follows from the fact that $\sum_{j=1}^d w(\mathcal{X}_j) \le \OPT$ (as the vertices in a rate spider decomposition of $\mathcal{T}^*$ are a subset of $V(\mathcal{T}^*)$), as well as Corollary~\ref{corollary:m}. $\qed$
\end{proof}

Using Lemma~\ref{lemma:iteration-i}, we can prove Theorem~\ref{thm:greedyPNWST}, by asserting that Algorithm~\ref{alg:greedyPNWST} is a $2 \ln (|T|+1)$--approximation for \textsc{Priority NWST}. The remainder of the proof can be completed by following the analysis by Klein and Ravi~\cite{KLEIN1995104}. We use a simpler analysis to show a marginally weaker approximation ratio.

\begin{proof}[Theorem~\ref{thm:greedyPNWST}]
Lemma~\ref{lemma:iteration-i} can equivalently be written as $\Delta C_i \le \frac{h_i}{|\mathcal{F}_i|}\OPT$.
Recall that $h_i \ge 2$ and $|\mathcal{F}_i|$ denote the number of rate trees merged on iteration $i$ and the number of rate trees in $\mathcal{F}$ at the beginning of iteration $i$, respectively. Suppose Algorithm~\ref{alg:greedyPNWST} runs for $I$ iterations. Thus we have $|\mathcal{F}_1| = |T|+1$ and $|\mathcal{F}_I| = 1$, as well as the relation $|\mathcal{F}_{i+1}| = |\mathcal{F}_i| - (h_i - 1) = |\mathcal{F}_i| - h_i + 1$. 

We use the simple algebraic fact that for positive integers $x < y$, we have $\frac{x}{y} \le \frac{1}{y} + \frac{1}{y-1} + \ldots + \frac{1}{y-x+1} = H_y - H_{y-x}$, where $H_x = \frac{1}{1} + \frac{1}{2} + \ldots + \frac{1}{x}$ is the $x^{\text{th}}$ harmonic number with $H_0 := 0$. Applying this fact, we have
\[\frac{h_i}{|\mathcal{F}_i|} \le H_{|\mathcal{F}_i|} - H_{|\mathcal{F}_i|-h_i} = H_{|\mathcal{F}_i|} - H_{|\mathcal{F}_{i+1}| - 1}. \]

Multiplying the above inequality by $\OPT$ and applying Lemma~\ref{lemma:iteration-i} yields
\begin{equation}
\Delta C_i \le (H_{|\mathcal{F}_i|} - H_{|\mathcal{F}_{i+1}| - 1}) \OPT. \label{eq:delta-ci}
\end{equation}

Summing inequality~\eqref{eq:delta-ci} from $i=1$ to $i=I$, we obtain an upper bound on the PNWST weight $\sum_{i=1}^{I} \Delta C_i$ in terms of $\OPT$ which approximately telescopes to $2 \ln (|T|+1) \OPT$. Specifically, let $F = \{ |\mathcal{F}_2|, \ldots, |\mathcal{F}_I|\}$ be a subset of $[|T|]$ where $1 \in F$ and $|T|+1 \not\in F$. Then summing inequality~\eqref{eq:delta-ci} equivalently yields
\begin{align*}
w(\mathcal{T}) = \sum_{i=1}^I \Delta C_i &\le H_{|T|+1}\OPT + \sum_{x \in F} (H_x - H_{x-1})\OPT\\
&\le (H_{|T|+1} + H_{|T|})\OPT \\
&\le (2 \ln (|T|+1) + 2)\OPT
\end{align*}
completing the proof. As stated previously, following the same analysis as in~\cite{KLEIN1995104} proves a marginally better approximation ratio of $2 \ln (|T|+1)$. $\qed$
\end{proof}

It is worth noting that the extension of the $(2 \ln |T|)$-approximation by Klein and Ravi~\cite{KLEIN1995104} to the \textsc{Priority NWST} problem is not immediately obvious, as we must be careful when merging multiple rate trees while simultaneously satisfying the priority and rate requirements.

\begin{lemma}\label{lemma:PNWST-polytime}
On iteration $i$ of Algorithm~\ref{alg:greedyPNWST}, a choice of $\mathcal{T}_r$, $v$, $b$, and $\mathcal{S}$ which minimizes $\gamma$ can be found in $O(n^3 k \log n)$ time.
\end{lemma}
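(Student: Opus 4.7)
The plan is to reduce the continuous-looking choice of $(\mathcal{T}_r, v, b, \mathcal{S})$ to a discrete enumeration over triples $(\mathcal{T}_r, v, b)$ followed by a one-dimensional ratio-minimization for $\mathcal{S}$. First I would precompute, for each of the $k$ priorities $b \in \{p_1,\ldots,p_k\}$, the table of vertex-weighted shortest-path distances $\sigma_b(u,w)$ for all $u,w \in V$. For a fixed $b$, this is a standard all-pairs shortest-paths computation in the graph whose interior vertices carry weight $w(\cdot,b)$ (e.g.\ split each vertex in two and run Dijkstra from every source), taking $O(n(m + n\log n)) = O(n^3)$ time per priority and $O(kn^3)$ overall, which is absorbed by the claimed bound.

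With the distance tables in hand, the plan is to enumerate every choice of root tree $\mathcal{T}_r \in \mathcal{F}$ (at most $|T|+1 = O(n)$ options), every candidate center $v \in V$ ($n$ options), and every rate $b \in \{p_1,\ldots,p_k\}$ with $b \le P(r)$ ($k$ options); it suffices to restrict $b$ to the $k$ priorities since $w(v,\cdot)$ is monotone nondecreasing in the rate, so the optimal $b$ is the smallest priority exceeding the root priorities in $\mathcal{S}$. For each such triple, the quantity $A := \sigma_b(r,v) + w(v,b)$ is a constant, and the task reduces to selecting a nonempty subset $\mathcal{S} \subseteq \mathcal{F} \setminus \{\mathcal{T}_r\}$ among the eligible rate trees (those whose roots $r_j$ satisfy $P(r_j) \le b$) minimizing
\[ \gamma \;=\; \frac{A + \sum_{\mathcal{T}_{r_j} \in \mathcal{S}} \sigma_{P(r_j)}(v,r_j)}{1 + |\mathcal{S}|}. \]
Here each $\sigma_{P(r_j)}(v,r_j)$ is an $O(1)$ lookup into the precomputed table.

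The key inner step is the claim that an optimal $\mathcal{S}$ is a \emph{prefix} of the eligible rate trees sorted in nondecreasing order of their contributions $x_j := \sigma_{P(r_j)}(v,r_j)$. This follows from a standard exchange argument: among all subsets of a fixed size $\ell$, choosing the $\ell$ smallest $x_j$'s minimizes the numerator while keeping the denominator fixed, so the optimum over all sizes is attained by some prefix. Sorting the eligible trees takes $O(n \log n)$ time, and scanning the $O(n)$ prefixes of length $\ell \ge 1$ to find the best $\gamma$ takes $O(n)$ additional time.

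Summing the work, the total cost of the enumeration is $O(n) \cdot O(n) \cdot O(k) \cdot O(n \log n) = O(n^3 k \log n)$, which dominates the $O(kn^3)$ precomputation. The only subtle point in the argument is the prefix-optimality claim for $\mathcal{S}$; everything else is bookkeeping over the precomputed $\sigma_b$ tables.
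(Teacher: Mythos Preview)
Your proposal is correct and follows essentially the same approach as the paper: enumerate $(\mathcal{T}_r, v, b)$ over $O(n)\cdot O(n)\cdot O(k)$ choices, sort the eligible rate trees by $\sigma_{P(r_j)}(v,r_j)$, and check all prefixes. You supply more detail than the paper does (the $O(kn^3)$ precomputation of the $\sigma_b$ tables, the justification for restricting $b$ to the $k$ priority values, and the exchange argument for prefix optimality), but the underlying argument is identical.
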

\begin{proof}
For a fixed root tree $\mathcal{T}_r$, center $v$, and $b \in \{p_1,p_2,\ldots,P(r)\}$, consider the set $\mathcal{F}'$ of all rate trees in $\mathcal{F} \setminus \{\mathcal{T}_r\}$ whose root has priority at most $b$. Sort $\mathcal{F}'$ in nondecreasing order by the weight of a minimum weight path from $v$ to its root, namely $\sigma_{P(r')}(v, r')$ for rate tree $\mathcal{T}_{r'}$, taking $O(n \log n)$ time. Then for fixed $\mathcal{T}_r$, center $v$, $b$, we can determine $\mathcal{S}$ which minimizes $\gamma$ by only considering $\mathcal{S}$ to be the first 1, 2, 3, \ldots, $|\mathcal{F}'|$ trees in the sorted list.

There are $|\mathcal{F}| = O(n)$ possible root trees, $n$ possible centers, and $O(k)$ possible choices for $b$. Using the above analysis, we can determine $\mathcal{S}$ which minimizes $\gamma$ over all possible choices in $O(n^3 k \log n)$ time. $\qed$
\end{proof}

Algorithm~\ref{alg:greedyPNWST} runs for $I \le |T|$ iterations, as the size of $|\mathcal{F}|$ decreases by at least 1 at each iteration. By Lemma~\ref{lemma:PNWST-polytime}, the running time of Algorithm~\ref{alg:greedyPNWST} is $O(n^4 k \log n)$. The approximation ratio for Algorithm~\ref{alg:greedyPNWST} is tight as is the case for the Ravi-Klein algorithm~\cite{KLEIN1995104}; see Figure~\ref{fig:pnwst-tightness}.

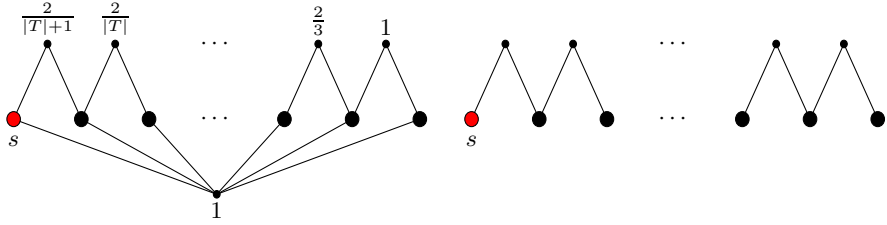
\begin{figure}[h]
\centering
\begin{subfigure}[b]{0.49\textwidth}
\centering
	\begin{tikzpicture}[xscale=0.9]
	\foreach \x in {-1.5, -0.5, 2.5, 3.5}{
	    \draw [fill=black] (\x, 1) circle [radius=0.05];
	    \draw (\x, 1) -- (\x-0.5, 0);
	    \draw (\x, 1) -- (\x+0.5, 0);
	}
	
	\foreach \x in {-2,-1,0,2,3,4}{
	    \draw (1, -1) -- (\x, 0);
		\ifthenelse{\x=-2}
		{\draw [fill=red] (\x, 0) circle [radius=0.1];}
		{\draw [fill=black] (\x, 0) circle [radius=0.1];}
		
	}

	\draw [fill=black] (1, -1) circle [radius=0.05];
	\node[below] at (1, -1) {\small 1};
	\node[above] at (-1.5, 1) {\small $\frac{2}{|T|+1}$};
	\node[above] at (-0.5, 1) {\small $\frac{2}{|T|}$};
	\node[above] at (2.5, 1) {\small $\frac{2}{3}$};
	\node[above] at (3.5, 1) {\small 1};
	\node at (1,0) {$\cdots$};
	\node at (1,1) {$\cdots$};
	\node[below] at (-2,-0.1) {$s$};
	\end{tikzpicture}
\end{subfigure}
\begin{subfigure}[b]{0.49\textwidth}
\centering
	\begin{tikzpicture}[xscale=0.9]
	\foreach \x in {-1.5, -0.5, 2.5, 3.5}{
	    \draw [fill=black] (\x, 1) circle [radius=0.05];
	    \draw (\x, 1) -- (\x-0.5, 0);
	    \draw (\x, 1) -- (\x+0.5, 0);
	}
	
	\foreach \x in {-2,-1,0,2,3,4}{
		\ifthenelse{\x=-2}
		{\draw [fill=red] (\x, 0) circle [radius=0.1];}
		{\draw [fill=black] (\x, 0) circle [radius=0.1];}
		
	}
    \node[below] at (1,-1) {\phantom{1}};
    \node[above] at (-1.5,1) {\phantom{$\frac{2}{|T|}$}};
	\node at (1,0) {$\cdots$};
	\node at (1,1) {$\cdots$};
	\node[below] at (-2,-0.1) {$s$};
	\end{tikzpicture}
\end{subfigure}
\caption{\emph{Left:} Tightness example for the Ravi-Klein NWST algorithm~\cite{KLEIN1995104} and Algorithm~\ref{alg:greedyPNWST}, with vertex weights indicated and $\OPT=1$. \emph{Right:} Example solution returned by Algorithm~\ref{alg:greedyPNWST}, with weight $2(H_{|T|+1} - 1) \le 2 \ln (|T|+1)$.}
\label{fig:pnwst-tightness}
\end{figure}

\section{Conclusions and future work}
First, by strengthening the analysis of~\cite{Charikar2004}, we showed that \textsc{Priority Steiner tree} is approximable with ratio $\min\{\lceil \log_2 |T| \rceil+1, k\rho\} \le \min\{1.443 \ln |T| + 2, k\rho\}$, and then provided a simple, parallelizable algorithm which achieves the same approximation guarantee. Second, we showed that a natural node-weighted generalization of \textsc{Priority Steiner tree} admits a $O(\log |T|)$-approximation using a generalization of the Ravi-Klein algorithm~\cite{KLEIN1995104} and spider decomposition. 
It remains open whether the approximability gap between $c \log \log n$~\cite{Chuzhoy2008} and $O(\log n)$ for \textsc{Priority Steiner tree} can be tightened, or whether a more efficient approximation algorithm for \textsc{Priority NWST} can be formed. As both problems are a special case of the Steiner tree in directed graphs, this suggests a hierarchy in terms of hardness of approximation.

\paragraph*{Acknowledgments}
The authors wish to thank Alon Efrat and Spencer Krieger for their discussions related to the \textsc{priority NWST} problem.


%
%
%
%

\bibliographystyle{splncs04}
\bibliography{ref}

\begin{thebibliography}{10}
\providecommand{\url}[1]{\texttt{#1}}
\providecommand{\urlprefix}{URL }
\providecommand{\doi}[1]{https://doi.org/#1}

\bibitem{MLST2018}
Ahmed, R., Angelini, P., Sahneh, F.D., Efrat, A., Glickenstein, D., Gronemann,
  M., Heinsohn, N., Kobourov, S., Spence, R., Watkins, J., Wolff, A.:
  Multi-level {S}teiner trees. Proceedings of the 17th International Symposium
  on Experimental Algorithms  (2018)

\bibitem{ahmed2020kruskalbased}
Ahmed, R., Sahneh, F.D., Kobourov, S., Spence, R.: Kruskal-based approximation
  algorithm for the multi-level {S}teiner tree problem. Proceedings of the 28th
  Annual European Symposium on Algorithms  (2020)

\bibitem{Angelopoulos2009}
Angelopoulos, S.: Online priority {S}teiner tree problems. In: WADS (2009)

\bibitem{Balakrishnan1994}
Balakrishnan, A., Magnanti, T.L., Mirchandani, P.: Modeling and heuristic
  worst-case performance analysis of the two-level network design problem.
  Management Sci.  \textbf{40}(7),  846--867 (1994)

\bibitem{Byrka2013}
Byrka, J., Grandoni, F., Rothvo{\ss}, T., Sanit{\`{a}}, L.: Steiner tree
  approximation via iterative randomized rounding. J. {ACM}  \textbf{60}(1),
  6:1--6:33 (2013)

\bibitem{Charikar2004}
Charikar, M., Naor, J.S., Schieber, B.: Resource optimization in {QoS}
  multicast routing of real-time multimedia. IEEE/ACM Trans. Networking
  \textbf{12}(2),  340--348 (2004)

\bibitem{chekuri2010approximation}
Chekuri, C., Hajiaghayi, M.T., Kortsarz, G., Salavatipour, M.R.: Approximation
  algorithms for nonuniform buy-at-bulk network design. SIAM Journal on
  Computing  \textbf{39}(5),  1772--1798 (2010)

\bibitem{Chlebnik2008}
Chleb{\'i}k, M., Chleb{\'i}kov{\'a}, J.: The {Steiner} tree problem on graphs:
  Inapproximability results. Theoret. Comput. Sci.  \textbf{406}(3),  207--214
  (2008)

\bibitem{Chuzhoy2008}
Chuzhoy, J., Gupta, A., Naor, J.S., Sinha, A.: On the approximability of some
  network design problems. ACM Trans. Algorithms  \textbf{4}(2),  23:1--23:17
  (2008)

\bibitem{current1986}
Current, J.R., ReVelle, C.S., Cohon, J.L.: The hierarchical network design
  problem. European Journal of Operational Research  \textbf{27}(1),  57 -- 66
  (1986)

\bibitem{demaine2009nwst}
Demaine, E.D., Hajiaghayi, M., Klein, P.N.: Node-weighted {Steiner} tree and
  group {Steiner} tree in planar graphs. In: Albers, S., Marchetti-Spaccamela,
  A., Matias, Y., Nikoletseas, S., Thomas, W. (eds.) Automata, Languages and
  Programming. pp. 328--340. Springer Berlin Heidelberg, Berlin, Heidelberg
  (2009)

\bibitem{dinur2013}
Dinur, I., Steurer, D.: Analytical approach to parallel repetition. Proceedings
  of the Annual ACM Symposium on Theory of Computing  (05 2013).
  \doi{10.1145/2591796.2591884}

\bibitem{feige1998threshold}
Feige, U.: A threshold of $\ln n$ for approximating set cover. J. ACM
  \textbf{45}(4),  634--652 (1998)

\bibitem{grandoni2019dst}
Grandoni, F., Laekhanukit, B., Li, S.: ${O}(\log^2 k /\log \log
  k)$-approximation algorithm for directed {Steiner} tree: A tight
  quasi-polynomial-time algorithm. In: Proceedings of the 51st Annual ACM
  SIGACT Symposium on Theory of Computing. p. 253–264. STOC 2019, Association
  for Computing Machinery, New York, NY, USA (2019),
  \url{https://doi.org/10.1145/3313276.3316349}

\bibitem{GUHA199957}
Guha, S., Khuller, S.: Improved methods for approximating node weighted
  {Steiner} trees and connected dominating sets. J. Inform. Comput.
  \textbf{150}(1),  57--74 (1999)

\bibitem{Hauptmann2015}
Hauptmann, M., Karpinski, M.: A compendium on {Steiner} tree problems (2015),
  \url{http://theory.cs.uni-bonn.de/info5/steinerkompendium/}

\bibitem{imase91dst}
Imase, M., Waxman, B.M.: Dynamic {Steiner} tree problem. SIAM Journal on
  Discrete Mathematics  \textbf{4}(3),  369--384 (1991)

\bibitem{Karpinski2005}
Karpinski, M., M\u{a}ndoiu, I.I., Olshevsky, A., Zelikovsky, A.: Improved
  approximation algorithms for the quality of service multicast tree problem.
  Algorithmica  \textbf{42}(2),  109--120 (2005)

\bibitem{KLEIN1995104}
Klein, P., Ravi, R.: A nearly best-possible approximation algorithm for
  node-weighted {Steiner} trees. J. Algorithms  \textbf{19}(1),  104--115
  (1995)

\bibitem{marathe1998bicriteria}
Marathe, M.V., Ravi, R., Sundaram, R., Ravi, S., Rosenkrantz, D.J., Hunt, H.B.:
  Bicriteria network design problems. Journal of Algorithms  \textbf{28}(1),
  142--171 (1998). \doi{https://doi.org/10.1006/jagm.1998.0930},
  \url{https://www.sciencedirect.com/science/article/pii/S0196677498909300}

\bibitem{maxemchuk1997}
{Maxemchuk}, N.F.: Video distribution on multicast networks. IEEE Journal on
  Selected Areas in Communications  \textbf{15}(3),  357--372 (1997)

\bibitem{mirchandani1996MTT}
Mirchandani, P.: The multi-tier tree problem. INFORMS J. Comput.
  \textbf{8}(3),  202--218 (1996)

\bibitem{naor2011online}
Naor, J.S., Panigrahi, D., Singh, M.: Online node-weighted {Steiner} tree and
  related problems. In: 52nd Annual IEEE Symposium on Foundations of Computer
  Science FOCS 2011. pp. 210--219. IEEE (January 2011)

\bibitem{risso2020qosmt}
Risso, C., Robledo, F., Nesmachnow, S.: {Mixed Integer Programming Formulations
  for Steiner Tree and Quality of Service Multicast Tree Problems}. Programming
  and Computer Software  \textbf{46}(8),  661--678 (2020)

\bibitem{robins2000}
Robins, G., Zelikovsky, A.: Improved {Steiner} tree approximation in graphs.
  In: Proceedings of the Eleventh Annual ACM-SIAM Symposium on Discrete
  Algorithms. p. 770–779. SODA ’00, Society for Industrial and Applied
  Mathematics, USA (2000)

\bibitem{turletti1994}
Turletti, T., chrysostome Bolot, J.: Issues with multicast video distribution
  in heterogeneous packet networks. In: In Proceedings of the Sixth
  International Workshop on Packet Video (1994)

\bibitem{xue2001gosst}
Xue, G., Lin, G.H., Du, D.Z.: Grade of service {Steiner} minimum trees in the
  {Euclidean} plane. Algorithmica (New York)  \textbf{31}(4),  479--500 (Jan
  2001)

\end{thebibliography}

\end{document}